\pgfplotsset{compat=newest}
\newcommand{\seq}[1]{\left\langle #1\right\rangle}
\newcommand{\seqGilt}[2]{\left\langle #1\gilt #2\right\rangle}
\newcommand{\Id}[1]{\ensuremath{\text{{\sf #1}}}}
\newcommand{\ceil}[1]{\left\lceil #1\right\rceil}
\newcommand{\set}[1]{\left\{ #1\right\}}
\newcommand{\gilt}{:}
\newcommand{\card}[1]{\left\vert{#1}\right\vert}
\newcommand{\abs}[1]{\left| #1\right|}
\newcommand{\nat}{\mathbb{N}}
\newcommand{\punkt}{\enspace .}
\newcommand{\Def}{:=}
\newcommand{\prob}[1]{{\mathbf{P}}\left[#1\right]}
\newcommand{\expect}[1]{{\mathbf{E}}\left[#1\right]}
\newcommand{\Oh}[1]{\mathcal{O}\!\left( #1\right)}
\newcommand{\Ohsmall}[1]{\mathcal{O}(#1)}
\newcommand{\Th}[1]{\Theta\!\left( #1\right)}
\newcommand{\Thsmall}[1]{\Theta(#1)}
\newcommand{\Om}[1]{\Omega\left(#1\right)}
\newcommand{\Omsmall}[1]{\Omega(#1)}
\newenvironment{code}{\noindent\normalsize\begin{tabbing}\hspace{2em}\=\hspace{2em}\=\hspace{2em}\=\hspace{2em}\=\hspace{2em}\=\hspace{2em}\=\hspace{2em}\=\hspace{2em}\=\hspace{2em}\=\hspace{2em}\=\kill}{\end{tabbing}}
\newcommand{\Of}{\ensuremath{\text{\bf of\ }}}
\newcommand{\Declare}[2]{#1\mbox{ \rm : }#2}
\newcommand{\Function} {{\bf Function\ }}
\newcommand{\Funct}[3]{\Function #1\Declare{{\rm (}{#2\rm )}}{#3}}
\newcommand{\Procedure}{{\bf Procedure\ }}
\newcommand{\Is}{\mbox{\rm := }}
\newcommand{\If}       {{\bf if\ }}
\newcommand{\Then}     {{\bf then\ }}
\newcommand{\Else}     {{\bf else\ }}
\newcommand{\Return}   {{\bf return\ }}
\newcommand{\Rem}[1]   {{\bf (*~}{\rm#1}{\bf ~*)}}
\newcommand{\RRem}[1]   {\`{\bf --\hspace{0.5mm}--~}{\rm#1}}
\newcommand{\Do}       {{\bf do\ }}
\newcommand{\For}      {{\bf for\ }}
\newcommand{\To}       {{\bf to\ }}
\newcommand{\Elsif}    {{\bf elsif\ }}
\newcommand{\Until}    {{\bf until\ }}
\newcommand{\sq}{\hbox{\rlap{$\sqcap$}$\sqcup$}}
\newcommand{\qed}{\hspace*{\fill}\sq}
\newenvironment{proof}{\noindent {\bf Proof.}\ }{\qed\par\vskip 4mm\par}
\newdimen\endofsize\endofsize=0.5em
\newtheorem{theorem}{Theorem}
\newtheorem{lemma}[theorem]{Lemma}
\newtheorem{corollary}[theorem]{Corollary}
\newcommand{\Tstart}{\alpha}
\newcommand{\Tword}{\beta}
\newcommand{\deleteMinPar}{\Id{deleteMin}\ensuremath{^*}}
\def\eps{\varepsilon} 
\def\epsilon{\varepsilon} 
\newcommand{\Error}{\tilde\eps}
\newcommand{\sfrac}[2]{\smash{\frac{#1}{#2}}}
\newcommand{\fakepar}[1]{\bigskip\par\noindent\textbf{#1}}
\newcommand{\frage}[2][?]{\impmark{#1}{[\textcolor{red}{\textbf{#2}}]}}
\NewDocumentCommand\todo{G{*}}{\frage[TODO]{#1}}
\newcommand{\impmark}[1]{\strut\vadjust{\domark{#1}}}
\newcommand{\domark}[1]{\vbox to 0pt{\kern-\dp\strutbox\smash{\llap{\textbf{\textcolor{red}{#1}}\kern1em}}\vss}}
\font\uwavefont=lasyb10 scaled 652
\def\dodgyText{\bgroup \markoverwith{\lower3.5\p@\hbox{\uwavefont \textcolor{red}{\char58}}}\ULon}
\newcommand{\dodgy}[2][]{\impmark{\ifthenelse{\isempty{#1}}{***}{* #1}}\dodgyText{#2}}
\begin{document}

\title{Communication Efficient Algorithms for\\ Top-k Selection Problems}

\author{Lorenz Hübschle-Schneider$^*$ \\ \href{mailto:huebschle@kit.edu}{huebschle@kit.edu} \and Peter Sanders$^*$ \\ \href{mailto:sanders@kit.edu}{sanders@kit.edu} \and Ingo Müller$^{*\dagger}$ \\ \href{mailto:ingo.mueller@kit.edu}{ingo.mueller@kit.edu}\and\\
$^*$ Karlsruhe Institute of Technology, Karlsruhe, Germany\\
$^\dagger$ SAP SE, Walldorf, Germany}

\date{}

\maketitle


\begin{abstract}
We present scalable parallel algorithms with sublinear per-processor
communication volume and low latency for several fundamental problems related
to finding the most relevant elements in a set, for various notions of
relevance: We begin with the classical selection problem with unsorted input.
We present generalizations with locally sorted inputs, dynamic content (bulk-parallel priority queues),
and multiple criteria. Then we move on to finding frequent objects and top-$k$ sum aggregation.
Since it is unavoidable that the output of these algorithms might
be unevenly distributed over the processors, we also explain how to
redistribute this data with minimal communication.

\vspace*{4mm}\noindent
{{\bf Keywords}: selection, frequent elements, sum aggregation, priority queue, sampling, branch-and-bound, data redistribution, threshold algorithm}
\end{abstract}



\section{Introduction}

Overheads due to communication
latency and bandwidth limitations of the communication networks are one of the
main limiting factors for distributed computing.  Parallel algorithm theory has
considered the latency part of this issue since its beginning.  In particular,
execution times polylogarithmic in the number~$p$ of processing elements (PEs)
were a main focus.  Borkar~\cite{Borkar13} argues that an exascale computer
could only be cost-effective if the communication capabilities (bisection width)
scale highly sublinearly with the amount of computation done in the connected
subsystems.  Google confirms that at data center scale, the network is the most
scarce resource and state that they ``don't know how to build big networks that
deliver lots of bandwidth''~\cite{Vahdat15}.  In a previous paper~\cite{SSM13}
we therefore proposed to look more intensively for algorithms that require
bottleneck communication volume sublinear in the local input size.  More
precisely, consider an input consisting of $n$ machine words distributed over
the PEs such that each PE holds $\Oh{n/p}$ words.  Then sublinear communication
volume means that no PE sends or receives more than $o(n/p)$ machine words of
data.

Here, we combine the latency and data volume aspects.  We consider some
fundamental algorithmic problems that have a large input (size~$n$) and a
relatively small output (size~$k$).  Since potentially many similar problems
have to be solved, both latency and communication volume have to be very
small---ideally polylogarithmic in~$p$ and the input parameters.  More
precisely, we consider problems that ask for the $k$ most ``relevant'' results
from a large set of possibilities, and aim to obtain low bottleneck
communication overhead.

In the simplest case, these are totally ordered elements and we ask for
the~$k$ smallest of them---the classical selection problem. Several variants of
this problem are studied in Section~\ref{s:selection}.  For the classical
variant with unsorted inputs, a careful analysis of a known algorithm~\cite{San98a-e}
shows that the previously assumed random allocation of the
inputs is actually not necessary and we get running time $\Ohsmall{\sfrac{n}{p}+\log p}$.
For locally sorted input we get latency~$\Ohsmall{\log^2kp}$. Interestingly,
we can return to logarithmic latency if we are willing to relax the output size~$k$
by a constant factor. This uses a new technique for obtaining a pivot element
with a given rank that is much simpler than the previously proposed techniques
based on sorting.

A data structure generalization of the selection problem are bulk-parallel priority queues.
Previous parallel priority queues are not communication efficient in the sense
that they move the elements around, either by sorting~\cite{DeoPra92} or by
random allocation~\cite{San98a-e}.  Section~\ref{s:pq} generalizes the results on selection from
Section~\ref{s:selection}. The key to making this work is to use an appropriately
augmented search tree data structure to efficiently support insertion, deletion,
and all the operations needed by the parallel selection algorithm.

A prominent problem in information retrieval is to extract the $k$ most
relevant objects (e.g.\ documents), where relevance is determined by a
monotonous function that maps several individual scoring functions to the
overall relevance.  For each individual score, a list of objects is precomputed
that stores the objects in order of decreasing score.  This is a multicriteria
generalization of the sorted top-$k$ selection problem discussed in
Section~\ref{s:selection}.  In Section~\ref{s:multic} we parallelize
established sequential algorithms for this problem.
The single-criterion selection algorithms are used as subroutines---%
the sorted one for approximating the list elements scanned by the sequential
algorithm and the unsorted one to actually identify the output. The algorithm
has polylogarithmic overhead for coordinating the~PEs and manages to contain unavoidable
load imbalance in a single phase of local computation. This is achieved with a
fast estimator of the output size generated with a given number of scanned
objects.

A fundamental problem in data mining is finding the most frequently occurring
objects.  This is challenging in a distributed setting since the globally most
frequent elements do not have to be locally frequent on any particular~PE.  In
Section~\ref{s:topk} we develop very fast sampling-based algorithms that find a
$(\eps,\delta)$-approximation or \emph{probably approximately correct answer},
i.e., with probability at least~$1-\delta$ the output is correct within~$\eps n$.
The algorithms run in time logarithmic in~$n$,~$p$, and~$1/\delta$.  From a
simple algorithm with running time factor~$1/\eps^2$ we go to more sophisticated
ones with factor~$1/\eps$.  We also show how to compute the exact result
with probability at least~$1-\delta$ if the elements are non-uniformly
distributed.

Subsequently, we generalize these results to sum aggregation, where object
occurrences are associated with a value.  In Section~\ref{s:sumagg}, we are thus
looking for the objects whose values add up to the highest sums.

All of the above algorithms have the unavoidable limitation that the output may be
unevenly distributed over the~PEs for general distributions of the input
objects.  This can lead to load imbalance affecting the efficiency of the
overall application. Offsetting this imbalance will require communication so
that one might argue that striving for communication efficiency in the selection
process is in vain.  However, our methods have several advantages over non-communication
efficient selection algorithms. First of all, priorities can be ignored during
redistribution since all selected elements are relevant. Hence, we can employ
any data redistribution algorithm we want. In particular, we can use an adaptive
algorithm that only moves data if that is really necessary. In
Section~\ref{s:redistribution}, we give one such algorithm that combines prefix
sums and merging to minimize data movement and incurs only logarithmic
additional delay. Delegating data movement to the responsibility of the
application also has the advantage that we can exploit properties of the
application that a general top-$k$ selection algorithm cannot. For
example, multicriteria selection algorithms as discussed in
Section~\ref{s:multic} are used in search engines processing a stream of many
queries. Therefore, it is enough to do load balancing over a large number of
queries and we can resolve persistent hot spots by adaptive data migration or
replication. Another example are branch-and-bound
applications~\cite{KarZha93,San98a-e}.  By randomizing necessary data
migrations, we can try to steer the system away from situations
with bad data distribution
(a proper analysis of such an algorithm is an open problem though).

Some of our main results are listed in Table~\ref{tbl:results}.
Refer to Appendix~\ref{app:prevproof} for proofs and further discussion.

\newcommand{\Problem}[1]{\multirow{2}{3.4cm}{#1}}
\newcommand{\ML}[1]{\multirow{2}{*}{#1}}
\newcommand{\Row}[3]{\Problem{#1} & \ML{~~#2~~} & \ML{~~#3~~} \\\\[0.5em]}
\newcommand{\LastRow}[3]{\Problem{#1} & \ML{~~#2~~} & \ML{~~#3~~} \\\\}

\begin{table}[bt]
  \caption[]{Our main results.  Parameters: input size $n$;
    output size~$k$; number of PEs~$p$; startup overhead~$\Tstart$;
    communication cost per word~$\Tword$; relative error~$\eps$;
    failure probability~$\delta$.  Monitoring queries are marked with $^\dag$.}
  \label{tbl:results}
\centering
\begin{tabular}{l cc}
\toprule
\textbf{Problem} & \multicolumn{2}{c}{\textbf{Asymptotic running time in our model} $\Oh{\cdot}$}\\&old&new\\\midrule
\Row{Unsorted Selection}
{$\Om{\Tword\frac{n}{p}} + \Tstart \log{p}$ \cite{San98a-e}}
{$\frac{n}{p} + \Tword \min\!\left(\!\sqrt{p}\frac{\log{n}}{\log{p}}, \frac{n}{p}\!\right) + \Tstart \log{p}$}

\Row{Sorted Selection}
{$\Om{k \log{n}}$ (seq.) \cite{VSIR91}}
{$\Tstart \log^2{\!kp}$ | $\Tstart \log{kp}$ (flexible $k$)}

\Row{Bulk Priority Queue \textsf{insert*} + \textsf{deleteMin*}}
{$\log{\frac{n}{k} + \Tstart\!\left(\!\frac{k}{p} + \log{p}\!\right)}$ \cite{San98a-e}}
{$\Tstart \log^2{\!kp}$ | $\Tstart \log{kp}$ (flexible $k$)}

\Row{Heavy Hitters}
{$\frac{n}{p} + \Tstart\frac{\sqrt{p}}{\eps}\log{n}$ \cite{HuaYiZha2012} $^\dag$ for~$\delta=\,$const.}
{cf. top-$k$ most frequent objects}

\Row{Top-$k$ Most Frequent Objects}{$\Om{\frac{n}{p} + \Tword \frac{k}{\eps} + \Tstart\frac{1}{\eps}}$ \cite{BabOls03short} $^\dag$}
{$\frac{n}{p} + \Tword\frac{1}{\eps} \sqrt{\frac{\log{p}}{p} \log{\frac{n}{\delta}}} + \Tstart\log{n}$}

\Row{Top-$k$ Sum Aggregation}
{$\frac{n}{p}+\Tword\frac{1}{\eps}\sqrt{p \log\frac{n}{\delta}} + \Tstart p$ (centralized)}
{$\frac{n}{p}+\Tword\frac{\log{p}}{\eps}\sqrt{\frac{1}{p}\log\frac{n}{\delta}} + \Tstart\log{n}$}

\LastRow{Multicriteria Top-$k$}{not comparable}{$\log{K}(m^2\log K +\Tword m +\Tstart\log p)$}

\bottomrule
\end{tabular}
\end{table}

\section{Preliminaries}\label{s:prelim}

Our input usually consists of a multiset~$M$ of~$\card{M}=n$ objects, each
represented by a single machine word.  If these objects are ordered, we assume
that their total ordering is unique.  This is without loss of generality since
we can make the value~$v$ of object~$x$ unique by replacing it by the pair
$(v,x)$ for tie breaking.

Consider~$p$ processing elements (PEs) connected by a network.  PEs are numbered
$1..p$, where~$a..b$ is a shorthand for~$\set{a,\ldots,b}$ throughout this
paper.  Assuming that a PE can send and receive at most one message at a time
(full-duplex, single-ported communication), sending a message of size~$m$
machine words takes time~$\Tstart+m\Tword$.  We often treat the time to initiate
a connection~$\Tstart$ and to send a single machine word~$\Tword$ as variables
in asymptotic analysis.  A running time of~$\Oh{x+\Tword y+\Tstart z}$ then
allows us to discuss \emph{internal work}~$x$, \emph{communication volume}~$y$,
and \emph{latency}~$z$ separately.  Depending on the situation, all three
aspects may be important, and combining them into a single expression for
running time allows us to specify them concisely.

We present some of the algorithms using high level pseudocode in a
\emph{single program multiple data} (SPMD) style---the same algorithm runs on
each~PE, which perform work on their local data and communicate predominantly through
collective operations. Variables are local by default. We can
denote data on remote~PEs using the notation~$x@i$, which refers to the value of
variable~$x$ on~PE~$i$. For example,~$\sum_i x@i$ denotes the global sum of~$x$
over all~PEs, which can be computed using a sum (all-)reduction.

\paragraph*{Collective communication.}
\emph{Broadcasting} sends a message to all~PEs.  \emph{Reduction} applies an
associative operation (e.g., sum, maximum, or minimum) to a vector of
length~$m$.  An \emph{all-reduction} also broadcasts this result to all~PEs.  A
\emph{prefix-sum} (scan) computes~$\sum_{i=1}^{j}x@i$ on~PE~$j$ where~$x$ is a
vector of length~$m$.  The \emph{scatter} operation distributes a message of
size~$m$ to a set of~$y$~PEs such that each of them gets a piece of size~$m/y$.
Symmetrically, a \emph{gather} operations collects~$y$ pieces of size~$m/y$ on a
single~PE.\@ All of these operations can be performed in
time~$\Oh{\Tword m + \Tstart\log p}$~\cite{BalEtAl95,SST09}.  In an all-to-all
personalized communication (all-to-all for short) each~PE sends one message of
size~$m$ to every other~PE.\@ This can be done in time
$\Oh{\Tword mp+\Tstart p}$ using direct point-to-point delivery or in time
$\Oh{\Tword mp\log p+\Tstart \log p}$ using indirect delivery \cite[Theorem
3.24]{Lei92}.
The all-to-all broadcast (aka gossiping) starts with a
single message of size~$m$ on each~PE and ends with all~PEs having all these
messages. This operation works in time~$\Oh{\Tword mp+\Tstart\log p}$.

\paragraph*{Fast inefficient Sorting.}
Sorting $\Oh{\sqrt{p}}$ objects can be done in time $\Oh{\Tstart\log p}$ using
a brute force algorithm performing all pairwise object comparisons in parallel
(e.g.,~\cite{ABSS15}).

\paragraph*{Search trees.}
Search trees can represent a sorted sequence of objects such that a variety of
operations can be supported in logarithmic time.  In particular, one can
insert and remove objects and search for the next largest object given a key.
The operation~$T.\Id{split}(x)$ splits T into two search trees~$T_1,T_2$ such
that~$T_1$ contains the objects of~$T$ with key~$\leq x$ and~$T_2$ contains the
objects of~$T$ with key~$> x$. Similarly, if the keys in a tree~$T_1$ are
smaller than the keys in~$T_2$ then~$\Id{concat}(T_1,T_2)$ returns a search tree
representing the concatenation of the two sequences.  If one additionally stores
the size of subtrees, one can also support the select operation~$T[i]$, which
returns the~$i$-th largest object in~$T$, and operation~$T.\Id{rank}(x)$,
which returns the number of objects in~$T$ of value at most~$x$. For an example
of such a data structure see~\cite[Chapter~7]{MehSan08}.

\paragraph*{Chernoff bounds.} We use the following Chernoff
bounds~\cite{MehSan08,MU05} to bound the probability that a sum~$X=X_1+\ldots+X_n$ of~$n$
independent indicator random variables deviates substantially from its expected value~$\expect{X}$. For~$0<\varphi<1$, we have
\begin{align}
&\prob{X < (1-\varphi)\expect{X}} \leq e^{-\frac{\varphi^2}{2}\expect{X}}\label{eq:ch_leq} \\
&\prob{X > (1+\varphi)\expect{X}} \leq e^{-\frac{\varphi^2}{3}\expect{X}}\label{eq:ch_geq} \punkt
\end{align}

\paragraph*{Bernoulli sampling.}
A simple way to obtain a sample of a set~$M$ of objects is to select each object
with probability~$\rho$ independent of the other objects.  For small sampling probability~$\rho$, the
naive implementation can be significantly accelerated from time~$\Oh{\card{M}}$ to
expected time $\Oh{\rho\card{M}}$ by using skip values---a value of~$x$ indicates
that the following~$x-1$ elements are skipped and the~$x$-th element
is sampled. These skip values follow a geometric
distribution with parameter~$\rho$ and can be generated in constant time.

\paragraph*{Distributed FR-Selection \rm\normalsize\cite{San98a-e}}\label{s:twoPivot}
Floyd and Rivest~\cite{FloRiv75} developed a modification of quickselect using
two pivots in order to achieve a number of comparisons that is close to the
lower bound.  This algorithm, FR-select, can be adapted to a distributed memory
parallel setting~\cite{San98a-e}.  FR-select picks the pivots based on sorting a
random sample~$S$ of~$\Oh{\sqrt{p}}$ objects, which can easily be done in
logarithmic time (e.g., the simple algorithm described in \cite{ABSS15} performs
brute-force comparisons of all possible pairs of sample elements).
Pivots~$\ell$ and~$r$ are chosen as the sample objects with
ranks~$k\card{S}/n\pm\Delta$ where~$\Delta=p^{1/4+\delta}$ for some small
constant~$\delta>0$.  For the analysis, one can choose~$\delta=1/6$, which
ensures that with high probability the range of possible values shrinks by a
factor $\Thsmall{p^{1/3}}$ in every level of recursion.

It is shown that a constant number of recursion levels suffice
if~$n=\Oh{p\log p}$ and if the objects are distributed randomly.  Note that the
latter assumption limits communication efficiency of the algorithm since it
requires moving all objects to random PEs for general inputs.

\renewcommand{\figurename}{Algorithm}
\begin{figure}
\begin{code}
\Funct{select}{\Declare{$s$}{Sequence \Of Object};
\Declare{$k$}{$\nat$}}{Object}\+\\
  \If $k=1$ \Then \Return $\min_is[1]@i$\RRem{base case}\\
  $(\ell,r)\Is \Id{pickPivots}(s)$\\
  $a \Is \seqGilt{e\in s}{e < \ell}$\\
  $b \Is \seqGilt{e\in s}{\ell \leq e \leq r}$\\
  $c \Is \seqGilt{e\in s}{e > r}$\RRem{partition}\\
  $n_a\Is \sum_{i}|a|@i$;\quad$n_b\Is \sum_{i}|b|@i$\RRem{reduction}\\
  \If $n_a  \geq k$ \Then \Return select$(a,k)$\\
  \If $n_a+n_b < k$ \Then \Return select$(c,k-n_a-n_b)$\\
  \Return select$(b,k-n_a)$
\end{code}
\caption{\label{fig:select}Communication efficient selection.}
\end{figure}

\paragraph*{Bulk Parallel Priority Queues.}\label{ss:ppq}
A natural way to parallelize priority queues is to use bulk
operations.  In particular, operation \deleteMinPar\ supports deletion
of the~$k$ smallest objects of the queue. Such a data structure can
be based on a heap where nodes store sorted sequences rather than objects~%
\cite{DeoPra92}. However, an even faster and simpler randomized way is to use
multiple sequential priority queues---one on each~PE~\cite{San98a-e}.
This data structure adopts the idea of Karp and Zhang~\cite{KarZha93}
to give every~PE a representative approximate view of the global situation
by sending inserted objects to random queues. However, in contrast to Karp
and Zhang~\cite{KarZha93}, \cite{San98a-e}~\nobreak implements an exact
\deleteMinPar\ using parallel selection and a few further tricks. Note
that the random insertions limit communication efficiency in this case.


\section{More Related Work}\label{s:related}

The fact that the amount of communication is important for parallel computing is
a widely studied issue. However, there are very few results on achieving sublinear
communication volume per processor.  In part, this is because models like parallel external
memory or resource-oblivious models assume that the input is located in a global
memory so that processing it requires loading all of it into the local caches at
least once. Refer to \cite{SSM13} for a more detailed discussion.

A good framework for studying bottleneck communication volume is the BSP model
\cite{Val94short} where a communication round with maximum amount of data per PE
$h$ (maximized over sent and received data) takes time $\ell+hg$ for some
machine parameters $\ell$ (latency) and $g$ (gap).  Summing the values of $h$
over all rounds yields the bottleneck communication volume we consider.  We do
not use the BSP model directly because we make heavy use of collective
communication routines which are not directly represented in the BSP model.
Further, the latency parameter $\ell$ is typically equal to $\alpha p$ in
real-world implementations of the BSP model (see also \cite{ABSS15}).  Also, we
are not aware of any concrete results in the BSP model on sublinear
communication volume in general or for top-$k$ problems in particular.  A
related recent line of research considers algorithms based on MapReduce (e.g.,
\cite{GSZ11}).  Communication volume is an important issue there but it seems
difficult to achieve sublinear communication volume since the MapReduce
primitive does not model locality well.

Recently, \emph{communication avoiding algorithms} have become an important
research direction for computations in high performance linear algebra and
related problems, e.g. \cite{CDKSY13}.  However, these results only apply to
nested loops with array accesses following a data independent affine pattern and
the proven bounds are most relevant for computations with work superlinear in
the input size.  We go into the opposite direction looking at problems with
linear or even sublinear work.

\paragraph*{Selection.}
Plaxton~\cite{Pla89} shows a superlogarithmic lower bound for selection on a
large class of interconnection networks. This bound does not apply to our model,
since we assume a more powerful network.

\paragraph*{Parallel Priority Queues.}
There has been intensive work on parallel priority queues.  The most scalable
solutions are our randomized priority queue~\cite{San98a-e} and Deo and Prasad's
parallel heap~\cite{DeoPra92}. Refer to~\cite{WGTT15TR} for a recent
over\-view of further approaches, most of which are for shared memory architectures
and mostly operate on centralized data structures with limited scalability.

\paragraph*{Multicriteria, Most Frequent Objects.}
Considerable work has been done on distributed top-$k$ computations in wide area
networks, sensor networks and for distributed streaming
models~\cite{BabOls03,CaoWang04,YiZhang13,HuaYiZha2012,KLEE2005}. However,
these papers use a master--worker approach where all communication has to go over a master node.
This implies up to $p$ times higher communication volume
compared to our results. Only rarely do randomized versions with better
running times exist, but nevertheless, communication volume at the master
node still increases with~$\sqrt{p}$\cite{HuaYiZha2012}.
Moreover, the scalability of TPUT~\cite{CaoWang04} and KLEE \cite{KLEE2005} is severely limited
by the requirement that the number of workers cannot exceed the number
of criteria. Furthermore, the top-$k$ most frequent objects problem
has received little attention in distributed streaming algorithms.
The only work we could find requires $\Oh{N p}$ working memory at the
master node, where~$N=\Oh{n}$ is the number of distinct objects in the
union of the streams~\cite{BabOls03}. The majority of papers
instead considers the significantly easier problem of identifying the
\emph{heavy hitters}, i.e. those objects whose occurrences account for
more than a fixed proportion of the input, or \emph{frequency tracking}, which tracks the
approximate frequencies of all items, but requires an additional selection step
to obtain the most frequent ones~ \cite{HuaYiZha2012,YiZhang13}.

\paragraph*{Sum Aggregation}
Much work has been done on aggregation in parallel and distributed settings,
e.g. in the database community~\cite{Cieslewicz2007,Li2013,Muller2015}.
However, these papers all ask for \emph{exact}
results for \emph{all} objects, not approximations for the~$k$ most important
ones.  We do not believe that exact queries are can be answered in a
communication-efficient manner, nor could we find any works that consider the
same problem, regardless of communication efficiency.

\fakepar{Data redistribution} using prefix sums is standard \cite{Ble89}. But
combining it with merging allows adaptive communication volume as described in
Section~\ref{s:redistribution}, which seems to be a new result. There are algorithms for
minimizing communication cost of data distribution in arbitrary networks.
However, these are much more expensive. Meyer auf der Heide et al. \cite{MOW96} use
maximum flow computations.  They also give an algorithm for meshes, which needs
more steps than the lower bound though. Solving the diffusion equations
minimizes the sum of the squares of the data to be moved in order to achieve
perfect load balance \cite{DieFM99}.  However, solving this equation is expensive
and for parallel computing we are more interested in the bottleneck
communication volume.

\section{Selection}\label{s:selection}

We consider the problem of identifying the elements of rank~$\leq k$ from a
set of~$n$ objects, distributed over~$p$ processors. First, we analyze
the classical problem with unsorted input, before turning to locally
sorted input in Sections~\ref{ss:musel} and~\ref{ss:approxMusel}.

\subsection{Unsorted Input}\label{ss:unsortedSelection}

Our first result is that using some minor adaptations and a more
careful analysis, the parallel FR-algorithm from
Section~\ref{s:prelim} does not actually need randomly distributed
data.
\begin{theorem}\label{thm:selectFR}
  Consider $n$ elements distributed over $p$ PE such that each PE holds
  $\Oh{n/p}$ elements.  The $k$ globally smallest of these elements can be
  identified in expected time
  $$\Oh{\frac{n}{p}+\Tword\min\left(\sqrt{p}\log_p n,
      \frac{n}{p}\right)+\Tstart\log n}\punkt$$
\end{theorem}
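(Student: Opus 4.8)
The plan is to show that the distributed Floyd--Rivest algorithm of Algorithm~\ref{fig:select}, with pivots drawn from a \emph{Bernoulli} sample, already meets the claimed bound once one observes that the random allocation assumed in~\cite{San98a-e} is needed only for \emph{load} arguments, not for the \emph{shrinkage} of the candidate set. One recursion level on a surviving multiset of size~$n_i$ is implemented as follows: Bernoulli-sample each surviving element with probability $\rho_i\Def\Thsmall{\sqrt p}/n_i$ (using skip values, so the expected local cost is $\Oh{\rho_i n_i/p}=\Oh{\sqrt p/p}$ per PE and $\Oh{\sqrt p}$ in total); route the resulting $\Thsmall{\sqrt p}$ sample elements onto the first $\Thsmall{\sqrt p}$ PEs and sort them by the brute-force method in time $\Oh{\Tstart\log p}$; take as pivots $\ell,r$ the sample elements of rank $k\card S/n_i\pm\Delta$, $\Delta\Def p^{1/4+\delta}$ (with $\pm\infty$ at the boundaries); broadcast $\ell,r$; partition locally; and obtain $n_a,n_b$ by a reduction. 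Hence a level costs $\Oh{n_i/p+\sqrt p}$ internal work plus $\Oh{\Tword\sqrt p+\Tstart\log p}$ for communication and coordination. The recursion bottoms out as soon as $n_i=\Oh{\sqrt p}$, when the survivors are gathered, brute-force sorted, and the answer read off by a prefix sum in time $\Oh{\Tword\sqrt p+\Tstart\log p}$; the case $k=1$ is the explicit base case (a minimum reduction).

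The heart of the proof is the probabilistic claim that, for a suitable constant $\delta\in(0,1/4)$, with probability $1-p^{-\omega(1)}$ a single level (i)~leaves the globally $k$-th smallest element inside~$b$ and (ii)~shrinks the candidate set by a factor $p^{\Omega(1)}$ (concretely $\Thsmall{p^{1/3}}$ for the choice in Section~\ref{s:prelim}). Both follow from Chernoff bounds~\eqref{eq:ch_leq}--\eqref{eq:ch_geq} applied to the binomially distributed number $B_t$ of sampled elements among the $t$ globally smallest: $\expect{B_k}=k\card S/n_i$, so the sample-rank of the element of global rank~$k$ deviates from $k\card S/n_i$ by $\Oh{\sqrt{\card S}}=\Oh{p^{1/4}}\ll\Delta$, giving~(i); and inverting the (steep) map $t\mapsto B_t$ shows the global rank of the sample's rank-$(k\card S/n_i\pm\Delta)$ element is $\Thsmall{\Delta n_i/\card S}$ up to an $\Oh{n_i/p^{1/4}}$ fluctuation, so $\card b=\Thsmall{\Delta n_i/\card S}=n_i/p^{\Omega(1)}$, giving~(ii). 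Crucially, these bounds use \emph{only} the randomness of the Bernoulli sample, which we generate ourselves, and therefore hold for an \emph{arbitrary} placement of the input --- exactly where we improve on~\cite{San98a-e}. Since a level that fails to shrink occurs with probability $p^{-\omega(1)}$, the expected number of levels is $\Oh{\log_{p^{\Omega(1)}}n}=\Oh{\log_p n}$, and with latency $\Oh{\Tstart\log p}$ per level the total latency is $\Oh{\Tstart\log p\cdot\log_p n}=\Oh{\Tstart\log n}$.

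For the internal work, the key observation is that the candidate set at every level is a \emph{subset of what each PE held initially}, so PE~$j$ does $\Oh{\min(n/p,\,n_i)}$ partitioning work at level~$i$; since $n_i\le n\cdot(p^{-\Omega(1)})^{i-1}$, we have $n_i\le n/p$ after $\Oh1$ levels, and the remaining terms form a geometric series, so the partitioning work totals $\Oh{n/p}$. The per-level $\Oh{\sqrt p}$ Bernoulli-scan and $\Oh{\Tword\sqrt p}$ communication sum to $\Oh{\sqrt p\log_p n}$ words and operations; in the regime where this dominates $n/p$ --- which forces $n$ to be polynomially bounded in~$p$, hence only $\Oh1$ levels --- a coarser variant (smaller sample, or redistribution followed by the bounded-$n$ analysis of~\cite{San98a-e}) communicates and works only $\Oh{n/p}$, yielding the stated $\min$. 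Finally, ``identifying'' the $k$ smallest elements means returning the splitter of global rank~$k$; each PE then marks its own smaller elements locally at no communication cost. \textbf{The main obstacle} is~(ii): pushing the shrinkage analysis through for worst-case input placement while controlling the rare non-shrinking levels, together with the somewhat fussy bookkeeping that absorbs the per-level $\Oh{\sqrt p}$ overheads and the small-$n$ case into the $\Oh{n/p}$ and $\min(\cdot)$ terms.
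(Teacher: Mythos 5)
Your proposal is correct and follows essentially the same route as the paper's own (outline) proof: Bernoulli sampling makes the Floyd--Rivest pivot analysis depend only on randomness you generate yourself and hence work for arbitrary data placement, the candidate set shrinks by a factor $p^{\Omega(1)}$ per level giving $\Oh{\log_p n}$ levels and $\Oh{\Tstart\log n}$ latency, internal work telescopes to $\Oh{n/p}$ via a geometric series, and scattering the $\Oh{\sqrt p}$ samples before the brute-force sort costs $\Oh{\Tword\sqrt p + \Tstart\log p}$ per level. The one spot where you take a detour is the $\min\left(\sqrt p\log_p n,\; n/p\right)$ term, which you obtain by switching to a coarser variant when $n$ is polynomially bounded in $p$; the paper gets it for the unmodified algorithm simply by observing that each PE's per-level communication is also bounded by its local candidate-set size, so the $\Oh{\Tword\,n/p}$ bound on communication volume holds simultaneously with the $\Oh{\Tword\sqrt p}$ per-level cap.
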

\begin{proof} (Outline)
The algorithm from~\cite{San98a-e} computes the pivots based on a sample~$S$ of size
$\Oh{\sqrt{p}}$. There, this is easy since the objects are randomly
distributed in all levels of recursion.  Here we can make no such assumption and
rather assume Bernoulli sampling with probability
$\sqrt{p}/\sum_i \card{s@i}$. Although this can be done in time proportional to the
local sample size, we have to be careful since (in some level of recursion)
the input might be so skewed that most samples come from the same
processor.  Since the result on sorting only works when the input data is
uniformly distributed over the~PEs, we have to account additional time
for spreading the samples over the~PEs.

Let~$x$ denote the~PE which maximizes $|s@x|$ in the current level of
recursion in the algorithm from Figure~\ref{fig:select}.  The total problem size
shrinks by a factor of $\Omsmall{p^{1/3}}$ in each level of recursion with high probability.  Hence, after
a constant number of recursion levels,~$x$ is $\Oh{n/p}$ as well.  Moreover, after~%
$\log_{\Omsmall{p^{1/3}}}\frac{n}{p}=\Ohsmall{\log_p\frac{n}{p}}$ further levels of
recursion, the problem is solved completely with high probability. Overall,
the total number of recursion levels is $\Ohsmall{1+\log_p\frac{n}{p}}=\Oh{\log_p{n}}$.

This already implies the claimed $\Oh{n/p}$ bound on internal computation
-- $\Oh{1}$ times $\Oh{n/p}$ until~$x$ has size $\Oh{n/p}$ and a geometrically
shrinking amount of work after that.

Of course, the same bound $\Ohsmall{\Tword\frac{n}{p}}$ also applies to the
communication volume. However, we also know that even if all samples come from a
single~PE, the communication volume in a single level of recursion is~$\Oh{\sqrt{p}}$.

Finally, the number of startups can be limited to $\Oh{\log p}$ per level of
recursion by using a collective scatter operation on those~PEs that have more than
one sample object. This yields $\Oh{\log p\log_pn}=\nobreak\Oh{\log n}$.
\end{proof}

\begin{corollary}
If $\Tstart$ and $\Tword$ are viewed as constants, the bound from Theorem~\ref{thm:selectFR}
reduces to $\Oh{\frac{n}{p}+\log p}$.
\end{corollary}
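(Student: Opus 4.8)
The plan is to simply substitute $\Tstart=\Tword=\Oh{1}$ into the bound of Theorem~\ref{thm:selectFR} and then simplify each of the three summands. After the substitution the expected running time reads $\Oh{\frac{n}{p}+\min\!\left(\sqrt{p}\log_p n,\frac{n}{p}\right)+\log n}$, so it suffices to argue that the second and third terms are each $\Oh{\frac{n}{p}+\log p}$.

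The communication term is immediate: by definition of the minimum, $\min\!\left(\sqrt{p}\log_p n,\frac{n}{p}\right)\le\frac{n}{p}$, so it is absorbed into the internal-work term $\frac{n}{p}$ with no further work. The only mildly delicate point is the latency term $\log n$. Here I would write $\log n=\log\frac{n}{p}+\log p$ and use the elementary inequality $\log x\le x$ (valid for $x\ge 1$): if $n\ge p$ then $\frac{n}{p}\ge 1$, hence $\log\frac{n}{p}\le\frac{n}{p}$ and thus $\log n\le\frac{n}{p}+\log p$; if instead $n<p$, then trivially $\log n<\log p$. In either case $\log n=\Oh{\frac{n}{p}+\log p}$, and combining the two bounds gives the claim.

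I do not expect a genuine obstacle, since the statement is a routine specialization of the theorem; the only thing to be careful about is stating the $n\ge p$ versus $n<p$ case split cleanly, because the standing hypothesis is merely that each PE holds $\Oh{n/p}$ elements rather than $n\ge p$ outright.
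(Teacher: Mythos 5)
Your proof is correct and matches the paper's (one-sentence) justification in substance: the paper also first reduces the bound to $\Oh{\frac{n}{p}+\log n}$ by absorbing the $\min$ term, and then observes that whenever $\log n$ dominates $n/p$ one has $\log n=\Oh{\log p}$, which is the same case analysis you carry out via the decomposition $\log n=\log\frac{n}{p}+\log p$. Your version is merely a little more explicit about the $n\ge p$ versus $n<p$ split; no substantive difference.
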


\noindent The bound from Theorem~\ref{thm:selectFR} immediately reduces
to~$\Oh{\frac{n}{p}+\log{n}}$ for constant~$\Tstart$ and~$\Tword$. We can
further simplify this by observing that when the~$\log{n}$ term
dominates~$n/p$, then~$\log{n} = \Oh{\log{p}}$.

\subsection{Locally Sorted Input}\label{ss:musel}

Selection on locally sorted input is easier than the unsorted problem from
Section~\ref{ss:unsortedSelection} since we only have to consider the locally
smallest objects and are able to locate keys in logarithmic time. Indeed, this
problem has been studied as the \emph{multisequence selection}
problem~\cite{VSIR91,SSP07}.

In~\cite{ABSS15}, we propose a particularly simple and intuitive method based on an
adaptation of the well-known quickselect algorithm~\cite{Hoa61b,MehSan08}.
For self-containedness, we also give that algorithm in Appendix~\ref{ss:msselect}, adapting it to the need in the present paper.
This algorithm needs running time~$\Oh{\Tstart\log^2kp}$ and has been on the
slides of Sanders' lecture on parallel algorithms since 2008 \cite{San08VLPA}.
Algorithm~\ref{alg:musel} can be viewed as a step backwards compared to our algorithm
from Section~\ref{ss:unsortedSelection} as using a single random pivot
forces us to do a deeper recursion. We do that because it makes it easy to
tolerate unbalanced input sizes in the deeper recursion levels. However, it is
possible to reduce the recursion depth to some extent by choosing pivots more
carefully and by using multiple pivots at once. We study this below for a variant
of the selection problem where we are flexible about the number $k$ of objects
to be selected.

\subsection[Flexible k, Locally Sorted Input]{Flexible $k$, Locally Sorted Input}\label{ss:approxMusel}
\newcommand{\kopt}{k^*}
\newcommand{\navg}{\bar{n}}
\newcommand{\kmax}{\overline{k}}
\newcommand{\kmin}{\underline{k}}

The $\Oh{\log^2pk}$ startups incurred in Section~\ref{ss:musel} can be reduced to~$\Oh{\log pk}$ if we are willing
to give up some control over the number of objects that are actually returned.  We now give
two input parameters~$\kmin$ and~$\kmax$ and the algorithm
returns the~$k$ smallest objects so that~$\kmin\leq k\leq \kmax$.
We begin with a simple algorithm that runs in logarithmic time if~%
$\kmax-\kmin=\Om{\kmax}$ and then explain how to
refine that for the case~$\kmax-\kmin=o(\kmax)$.

The basic idea for the simple algorithm is to take a Bernoulli sample
of the input using a success probability of~$1/x$, for~$x\in\kmin..\kmax$. Then, the
expected rank of the smallest sample object is~$x$, i.e., we have a
truthful estimator for an object with the desired rank.  Moreover, this object
can be computed efficiently if working with locally sorted data: the local rank
of the smallest local sample is geometrically distributed with parameter~%
$1/x$.  Such a number can be generated in constant time.
By computing the global minimum of these locally smallest samples,
we can get the globally smallest sample~$v$ in time~$\Oh{\Tstart\log p}$. We can
also count the exact number~$k$ of input objects  bounded by this estimate in time~%
$\Oh{\log k+\Tstart\log p}$---we locate~$v$ in each local data set in time~%
$\Oh{\log k}$ and then sum the found positions.  If~$k\in\kmin..\kmax$, we
are done. Otherwise, we can use the acquired information as in any variant of
quickselect.

At least in the recursive calls, it can happen that $\kmax$ is close to the
total input size~$n$.  Then it is a better strategy to use a dual algorithm
based on computing a global maximum of a Bernoulli sample.  In
Algorithm~\ref{alg:approxMusel} we give pseudocode for a combined algorithm
dynamically choosing between these two cases. It is an interesting problem which
value should be chosen for~$x$. The formula used in
Algorithm~\ref{alg:approxMusel} maximizes the probability that $k\in\kmin..\kmax$.
This value is close to the arithmetic mean when~$\kmax/\kmin\approx 1$ but it is
significantly smaller otherwise. The reason for this asymmetry is that larger
sampling rates decrease the variance of the geometric distribution.

\begin{theorem}\label{thm:approxMusel}
If $\kmax-\kmin=\Om{\kmax}$, then
Algorithm \Id{amsSelect} from Figure~\ref{alg:approxMusel} finds the~$k$
smallest elements with~$k\in\kmin..\kmax$ in expected time~%
$\Oh{\log\kmax+\Tstart\log p}$.
\end{theorem}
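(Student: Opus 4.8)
The plan is to analyze the simple (non-refined) version of \Id{amsSelect}: take a Bernoulli sample with success probability $1/x$ for a suitable $x\in\kmin..\kmax$, compute the globally smallest sample $v$ via a min-reduction over the locally smallest samples, count the exact global rank $k$ of $v$, and recurse if $k\notin\kmin..\kmax$. I would first argue correctness: the algorithm only ever discards objects that are provably outside the target range (as in quickselect), so whenever it terminates it returns exactly the $k$ smallest objects for some $k$, and the termination test guarantees $k\in\kmin..\kmax$. Then I would bound the cost of a single round. Generating the local skip value (a geometrically distributed rank) takes $\Oh{1}$; the min-reduction costs $\Oh{\Tstart\log p}$; locating $v$ in each locally sorted sequence costs $\Oh{\log\kmax}$ by binary search (the local data relevant to the recursion has size $\Oh{\kmax}$ once we are inside the region of interest, or we can phrase the bound in terms of the current local size); and summing the local ranks is another $\Oh{\Tstart\log p}$ reduction. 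So one round costs $\Oh{\log\kmax+\Tstart\log p}$, and the whole claim reduces to showing the expected number of rounds is $\Oh{1}$.

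For the round count, the key probabilistic fact is that with $x$ chosen in $\kmin..\kmax$, the global rank $K$ of $v$ is a sum of independent geometric random variables truncated by the data, and $\expect{K}=\Theta(x)=\Theta(\kmax)$, using $\kmax-\kmin=\Om{\kmax}$ so that $\kmin=\Omega(\kmax)$ and the whole window has width $\Theta(\kmax)$. I would show that $\prob{K\in\kmin..\kmax}\geq c$ for some absolute constant $c>0$. The clean way is: with success probability $1/x$, the smallest sampled element has rank distributed (ignoring truncation) as a geometric with mean $x$; choosing $x$ near the middle of the window and invoking the standard concentration/anti-concentration of the geometric distribution — equivalently, $\prob{K\le t}=1-(1-1/x)^{t}\approx 1-e^{-t/x}$ — gives $\prob{\kmin\le K\le \kmax}\ge 1-e^{-\kmin/x}-e^{-\kmax/x}\cdot(\text{stuff})$, which is bounded below by a constant once $\kmin/x$ and $\kmax/x$ are both bounded away from $0$ and $\infty$, which holds because $x\asymp\kmin\asymp\kmax$. (One must also handle the dual max-based branch symmetrically; the dynamic choice in Algorithm~\ref{alg:approxMusel} only helps.) Hence each round independently succeeds with constant probability, so the number of rounds is stochastically dominated by a geometric random variable with constant parameter and has expectation $\Oh{1}$; multiplying by the per-round cost and using Wald's identity (or linearity of expectation over the independent rounds, noting per-round cost is bounded deterministically by $\Oh{\log\kmax+\Tstart\log p}$) yields the claimed expected time $\Oh{\log\kmax+\Tstart\log p}$.

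The main obstacle I anticipate is the truncation effect: the rank of the smallest sample is only geometric if there are enough elements, and in recursive calls the surviving subproblem could in principle be small relative to $x$, distorting $\expect{K}$ and the success probability. I would deal with this by noting that the recursion maintains the invariant that the number of surviving candidates is at least $\kmin$ (we never recurse into a side with fewer than $\kmin$ elements, since then we would already be done), so $x\le\kmax$ is never larger than the available population by more than a constant factor, keeping $\expect{K}=\Theta(x)$ and the anti-concentration bound valid in every round. A secondary point to check carefully is that the binary-search cost is $\Oh{\log\kmax}$ rather than $\Oh{\log n}$ — this is fine because we only ever search within the locally retained portion, whose size we can keep $\Oh{\kmax}$ after the first round by truncating each local sequence to its $\kmax$ smallest elements at the start. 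I would close by remarking that the refinement for $\kmax-\kmin=o(\kmax)$ is deferred to the discussion of Algorithm~\ref{alg:approxMusel}, since the theorem as stated only covers the constant-relative-gap regime.
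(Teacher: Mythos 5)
Your proposal follows the same route as the paper's proof: bound one round by $\Oh{\log\kmax+\Tstart\log p}$ (binary search for the pivot plus a constant number of reductions), observe that the rank of the smallest sampled element is geometrically distributed with parameter $1/x$, show it lands in $\kmin..\kmax$ with constant probability, and conclude that the expected number of rounds is $\Oh{1}$. The paper does exactly this, sidestepping the change of parameters across recursive calls by analyzing a weaker variant that retries with the same parameters and fixing $x=\kmin$, which gives success probability $\approx 1/e - e^{-\kmax/\kmin}$.

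There is, however, one concrete error in your justification of the key probabilistic step. You deduce from $\kmax-\kmin=\Om{\kmax}$ that $\kmin=\Omega(\kmax)$ and hence $x\asymp\kmin\asymp\kmax$. This implication is false: the hypothesis says $\kmax-\kmin\ge c\,\kmax$, i.e.\ $\kmin\le(1-c)\kmax$, which is perfectly consistent with $\kmin=1$ and $\kmax$ arbitrarily large. In that regime $\kmin/x$ is \emph{not} bounded away from $0$, so the stated reason for the anti-concentration bound ("$\kmin/x$ and $\kmax/x$ both bounded away from $0$ and $\infty$") does not apply. The conclusion survives without assuming $\kmin\asymp\kmax$: for any $x\in\kmin..\kmax$ the success probability is $(1-1/x)^{\kmin-1}-(1-1/x)^{\kmax}=(1-1/x)^{\kmin-1}\bigl(1-(1-1/x)^{\kmax-\kmin+1}\bigr)$; the first factor is at least $(1-1/x)^{x-1}\ge e^{-1}$ because $\kmin\le x$, and the second is at least $1-e^{-(\kmax-\kmin)/\kmax}\ge 1-e^{-c}$ because $x\le\kmax$. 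So the step is fixable by a two-line computation, but as written the inference is wrong and should be replaced. The remaining ingredients---the per-round cost accounting, the preservation of the window-width condition under recursion, and truncating local sequences to their $\kmax$ smallest elements so the binary search costs $\Oh{\log\kmax}$---are fine and in places more careful than the paper's own outline.
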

\begin{proof} (Outline)
One level of recursion takes time $\Oh{\Tstart\log p}$ for collective
communication operations~(min, max, or sum reduction) and time
$\Oh{\log \kmax}$ for locating the pivot $v$.
It remains to show that the expected recursion depth is constant.

We actually analyze a weaker algorithm that keeps retrying with the same parameters rather
than using recursion and that uses probe $x=\nobreak\kmin$. We show that, nevertheless,
there is a constant success probability (i.e.,
$\kmin\leq k\leq \kmax$ with constant probability).  The rank of $v$ is geometrically distributed with parameter
$1/x$.
The success probability becomes
$$\sum_{k=\kmin}^{\kmax}\left(1-\frac{1}{\kmin}\right)^{k-1} \frac{1}{\kmin}=
  \frac{\kmin}{\kmin-1} \left(1-\frac{1}{\kmin}\right)^{\kmin} - \left(1-\frac{1}{\kmin}\right)^{\kmax} \approx
  \frac{1}{e}-e^{-\frac{\kmax}{\kmin}}$$
which is a positive constant if~$\kmax-\kmin=\Om{\kmax}$.
\end{proof}

\begin{figure}[bt]
\begin{code}
\Funct{amsSelect}{\Declare{$s$}{Object\texttt{[]}};
\Declare{$\kmin$}{$\nat$}; \Declare{$\kmax$}{$\nat$}}{Object\texttt{[]}}\+\\
  \If $\kmin < n-\kmax$ \Then\RRem{min-based estimator}\+\\
    $x\Is \Id{geometricRandomDeviate}\left(1-\left(\frac{\kmin-1}{\kmax}\right)^{\frac{1}{\kmax-\kmin+1}}\right)$\\
    \If $x>|s|$ \Then $v\Is\infty$ \Else $v\Is s[x]$\\
    $v\Is \min_{1\leq i\leq p}v@i$\RRem{minimum reduction}\-\\
  \Else\RRem{max-based estimator}\+\\
    $x\Is \Id{geometricRandomDeviate}\left(1-\left(\frac{n-\kmax}{n-\kmin+1}\right)^{\frac{1}{\kmax-\kmin+1}}\right)$\\
    \If $x>|s|$ \Then $v\Is-\infty$ \Else $v\Is s[|s|-x+1]$\\
    $v\Is \max_{1\leq i\leq p}v@i$\RRem{maximum reduction}\-\\
  find $j$ such that $s[1..j] \leq v$ and $s[j+1..]> v$\\
  $k\Is \sum_{1\leq i\leq p}|j@i|$\RRem{sum all-reduction}\\
  \If $k<\kmin$ \Then\+ \\
    \Return $s[1..j]\cup\Id{amsSelect}(s[j+1..|s|], \kmin-k, \kmax-k, n-k)$\-\\
  \If $k>\kmax$ \Then\+\\
    \Return $\Id{amsSelect}(s[1..j], \kmin, \kmax, k)$\-\\
  \Return $s[1..j]$
\end{code}
\caption{\label{alg:approxMusel}Approximate multisequence
  selection.  Select the~$k$ globally smallest out of~$n$ objects
  with~$\kmin\leq\nobreak k\leq\nobreak
  \kmax$.  Function~\Id{geometricRandomDeviate} is a standard library
  function for generating geometrically distributed random variables
  in constant time \cite{PreEtAl92}.}
\end{figure}

\paragraph*{Multiple Concurrent Trials.} The running time of Algorithm~\ref{alg:approxMusel} is dominated by the
logarithmic number of startup overheads for the two reduction operations it
uses. We can exploit that reductions can process long vectors
using little additional time. The idea is to take~$d$ Bernoulli samples of the input and to compute
$d$ estimates for an object of rank~$x$. If any of these estimates turns out to have exact rank
between~$\kmin$ and~$\kmax$, the recursion can be stopped.
Otherwise, we solve a recursive instance
consisting of those objects enclosed by the largest underestimate and the smallest overestimate
found.

\begin{theorem}\label{thm:vectorMusel}
If $\kmax-\kmin=\Om{\kmax/d}$,
an algorithm processing batches of $d$ Bernoulli samples can be implemented to run in expected time
$\Oh{d\log\kmax+\Tword d +\Tstart\log p}$.
\end{theorem}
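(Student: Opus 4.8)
The plan is to follow Algorithm~\ref{alg:approxMusel} closely, replacing its single Bernoulli sample per recursion level by a batch of $d$ independent ones and packing the resulting reductions into length-$d$ vectors. At each level I would keep, on every PE, a locally sorted subsequence together with parameters $\kmin,\kmax,n$, and, exactly as in Algorithm~\ref{alg:approxMusel}, choose the min-based estimator when $\kmin<n-\kmax$ and the max-based one otherwise; I describe the min-based case, the other being symmetric. All $d$ samples use the success probability $\rho=1-((\kmin-1)/\kmax)^{1/(\kmax-\kmin+1)}$ from Algorithm~\ref{alg:approxMusel}. For the $j$-th sample each PE draws one geometric deviate to obtain the position of its smallest locally sampled object, hence a candidate in constant time; a single minimum reduction over the length-$d$ candidate vector then produces the $d$ globally smallest sampled objects $v_1,\dots,v_d$ in time $\Oh{\Tword d+\Tstart\log p}$. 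Next, every PE locates all $d$ pivots in its local sequence and one sum all-reduction over a length-$d$ vector returns their exact global ranks $k_1,\dots,k_d$, again in $\Oh{\Tword d+\Tstart\log p}$; as in the proof of Theorem~\ref{thm:approxMusel} a single pivot is located in $\Oh{\log\kmax}$, so this step does $\Oh{d\log\kmax}$ internal work. If some $k_j\in\kmin..\kmax$ the algorithm returns the corresponding prefix; otherwise it recurses on the elements of global rank strictly between the largest underestimate and the smallest overestimate among the $k_j$ (using $0$ and $n$ as defaults when one side is empty), shifting the parameters accordingly.

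With this structure one level costs $\Oh{d\log\kmax+\Tword d+\Tstart\log p}$, so everything hinges on showing that the expected recursion depth is $\Oh{1}$. Here I would reuse the reduction from the proof of Theorem~\ref{thm:approxMusel}: analyze instead the weaker algorithm that, rather than recursing, simply repeats a full batch with the original parameters. A genuine recursive step only shrinks the candidate interval, and since $\kmax-\kmin$ never increases while $\kmax$ never increases it preserves (indeed strengthens) the invariant $\kmax-\kmin=\Om{\kmax/d}$; hence the true algorithm needs no more rounds than the retrying variant. So it suffices to show that one batch of $d$ trials succeeds---some $k_j$ lands in $\kmin..\kmax$---with probability at least a positive constant.

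The crux, and the step I expect to be the most delicate, is the estimate that a \emph{single} trial succeeds with probability $q=\Om{1/d}$. Since the rank of $v_j$ is geometric with parameter $\rho$, we have $q=(1-\rho)^{\kmin-1}-(1-\rho)^{\kmax}=(1-\rho)^{\kmin-1}\bigl(1-(\kmin-1)/\kmax\bigr)$, using $(1-\rho)^{\kmax-\kmin+1}=(\kmin-1)/\kmax$ by the choice of $\rho$ (which is in fact the maximizer of $q$). The second factor is $(\kmax-\kmin+1)/\kmax=\Th{1/d}$ by hypothesis, while for the first I would observe that $(1-\rho)^{\kmin-1}=((\kmin-1)/\kmax)^{(\kmin-1)/(\kmax-\kmin+1)}$ has exponent $\Th{d}$ and base $1-\Th{1/d}$, hence equals $\Th{1}$; multiplying gives $q=\Om{1/d}$. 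Independence of the $d$ trials then makes the batch failure probability at most $(1-q)^{d}\le e^{-qd}=e^{-\Om{1}}<1$, a constant; so the expected number of batches is $\Oh{1}$ and the claimed total expected time follows. This $q=\Om{1/d}$ bound is precisely the weakened form of the constant per-trial success probability of Theorem~\ref{thm:approxMusel} that a batch of $d$ samples is meant to compensate, and it is exactly what the hypothesis $\kmax-\kmin=\Om{\kmax/d}$ is calibrated for. The remaining points---edge handling when a batch yields only underestimates or only overestimates, and checking that the set of elements whose global rank lies strictly between the best underestimate and the best overestimate is again a valid locally sorted multisequence instance---are routine.
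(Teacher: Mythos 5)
Your proposal is correct and follows essentially the same route as the paper's (much terser) outline: bound the cost of one level by $\Oh{d\log\kmax+\Tword d+\Tstart\log p}$ using vector-valued reductions, show a single Bernoulli-sample trial succeeds with probability $\Om{1/d}$, and conclude that a batch of $d$ independent trials succeeds with constant probability, giving constant expected recursion depth. Your explicit computation of the per-trial success probability $q=(1-\rho)^{\kmin-1}\bigl(1-(\kmin-1)/\kmax\bigr)$ is a correct filling-in of the paper's ``analogous to Theorem~\ref{thm:approxMusel}'' step (and the first factor is in fact uniformly at least $1/e$ since it has the form $(1+t)^{-1/t}$), so this is a faithful, more detailed version of the intended argument.
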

\begin{proof} (Outline)
A single level of recursion runs in time $\Oh{d\log\kmax+\Tword d + \Tstart\log p}$.
Analogous to the proof of Theorem~\ref{thm:approxMusel}, it can be shown that the
success probability is $\Om{1/d}$ for a single sample.  This implies that the
probability that any of the $d$ independent samples is successful is constant.
\end{proof}

For example, setting $d=\Th{\log p}$, we obtain communication time $\Oh{\Tstart\log p}$
and $\Oh{\log k\log p}$ time for internal work.

\section{Bulk Parallel Priority Queues}\label{s:pq}



We build a global bulk-parallel priority queue from
local sequential priority queues as in \cite{KarZha93,San98a-e}, but never
actually move elements.  This immediately implies that insertions simply go to
the local queue and thus require only~$\Oh{\log n}$ time without any communication.
Of course, this complicates operation \deleteMinPar. The number of
elements to be retrieved from the individual local queues can vary arbitrarily
and the set of elements stored locally is not at all representative for the
global content of the queue.  We can not even afford to individually remove the
objects output by \deleteMinPar\ from their local queues. 

We therefore replace the ordinary priority queues used in~\cite{San98a-e} by
search tree data structures that support insertion, deletion, selection,
ranking, splitting and concatenation of objects in logarithmic time (see also
Section~\ref{s:prelim}). To become independent of the actual tree size of up to
$n$, we furthermore augment the trees with two arrays storing the path to the
smallest and largest object respectively.  This way, all required operations
can be implemented to run in time $\Oh{\log k}$ rather than $\Oh{\log n}$.

Operation $\deleteMinPar$ now becomes very similar to the multi-sequence
selection algorithms from Section~\ref{ss:approxMusel} and Appendix~\ref{ss:msselect}.  The
only difference is that instead of sorted arrays, we are now working on search
trees.  This implies that selecting a local object with specified
local rank (during sampling) now takes time~$\Oh{\log k}$ rather than constant
time. However, asymptotically, this makes no difference since for any such
selection step, we also perform a ranking step, which takes time~$\Oh{\log k}$
anyway in both representations.

One way to implement the recursion in the selection algorithms is via splitting.
Since the split operation is destructive, after returning from the recursion, we have
to reassemble the previous state using concatenation.
Another way that might be faster and simpler in practice
is to represent a subsequence of~$s$ by~$s$ itself plus cursor information specifying
rank and key of the first and last object of the subsequence.

Now, we obtain the following by applying our results on selection from
Section~\ref{s:selection}.
\begin{theorem}
Operation~$\deleteMinPar$ can be implemented to run in the following
expected times. With fixed batch size~$k$, expected time $\Ohsmall{
\Tstart\log^2kp}$ suffices. For flexible batch size in~$\kmin..\kmax$,
where~$\kmax-\kmin=\Om{\kmax}$, we need expected time
$\Oh{\Tstart\log kp}$. If~$\kmax-\kmin=\Om{\kmax/d}$, expected time
$\Oh{d\log\kmax+\Tword d +\Tstart\log p}$ is sufficient.
\end{theorem}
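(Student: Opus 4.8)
The plan is to reduce \deleteMinPar\ directly to the multisequence selection algorithms of Section~\ref{s:selection}, run on the $p$ local search trees rather than on sorted arrays. For fixed batch size $k$ I would invoke Algorithm~\ref{alg:musel} from Appendix~\ref{ss:msselect}; for a flexible batch in $\kmin..\kmax$ with $\kmax-\kmin=\Om{\kmax}$, Algorithm~\ref{alg:approxMusel} (Theorem~\ref{thm:approxMusel}); and for $\kmax-\kmin=\Om{\kmax/d}$, the batched variant of Theorem~\ref{thm:vectorMusel}. The first step is to observe that each local operation these algorithms use---select an element of a prescribed local rank, rank a key, and (at the end) cut the sequence at the pivot---is supported by an augmented search tree in time $\Oh{\log k}$; the path-array augmentation is exactly what keeps this at $\Oh{\log k}$ instead of $\Oh{\log n}$, so the global queue size~$n$ never enters the bound.

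Next I would account for the one genuine discrepancy with the array analysis: selecting the element of a given local rank, which was $\Oh{1}$ for arrays, now costs $\Oh{\log k}$. As noted in the text, every such selection is paired with a ranking step that already costs $\Oh{\log k}$, so the per-recursion-level cost---and hence the total running time---of each selection algorithm is unchanged, and the three target bounds follow verbatim from the cited results (using $\log\kmax=\Oh{\log kp}$ in the second case). I would then spell out the deletion itself: the selection terminates with a pivot $v$ whose global rank equals the returned $k$, and since object values are unique (Section~\ref{s:prelim}) the cut at $v$ is clean; each PE performs one $T.\Id{split}(v)$, keeps the upper part, rebuilds its two path arrays in $\Oh{\log k}$ time, and emits the lower part as its share of the output. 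If the recursion was realized destructively by splitting on the way down, the on-the-way-up $\Id{concat}$s restore the retained trees at the same asymptotic cost; alternatively one uses the cursor representation mentioned above and avoids the reassembly entirely.

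The main thing to get right---and the only place real care is needed---is that the probabilistic analysis underlying Theorems~\ref{thm:approxMusel} and~\ref{thm:vectorMusel} (Bernoulli sampling with a geometrically distributed skip/first-sample rank, constant expected recursion depth, and $\Om{1/d}$ per-trial success probability) is insensitive to the representation: it refers only to the ranks and values of the elements, not to how the locally sorted sequence is stored, so it transfers unchanged. Everything else is bookkeeping---a constant number of extra reductions plus $\Oh{\log k}$ local work per level for split/concat and path-array maintenance, all absorbed by the bounds already charged. Concluding, \deleteMinPar\ inherits $\Ohsmall{\Tstart\log^2 kp}$, $\Oh{\Tstart\log kp}$, and $\Oh{d\log\kmax+\Tword d+\Tstart\log p}$ respectively.
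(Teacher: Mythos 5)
Your proposal is correct and follows essentially the same route as the paper: replace the local queues by augmented search trees with all operations in $\Oh{\log k}$, observe that the only change versus the array-based selection algorithms is that local rank selection now costs $\Oh{\log k}$ (absorbed because each such step is paired with a ranking step of the same cost), handle the recursion via split/concat or cursors, and then invoke the selection results of Section~\ref{s:selection}. The extra details you supply (the final clean cut at the pivot and the representation-independence of the probabilistic analysis) are consistent with, and slightly more explicit than, the paper's own argument.
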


Note that flexible batch sizes might be adequate for many applications. For example,
the parallel branch-and-bound algorithm from \cite{San98a-e} can easily be
adapted: In iteration~$i$ of its main loop, it deletes the smallest~$k_i=\Oh{p}$
elements (tree nodes) from the queue, expands these nodes in parallel, and
inserts newly generated elements (child nodes of the processed nodes).  Let
$K=\sum_ik_i$ denote the total number of nodes expanded by the parallel
algorithm.  One can easily generalize the proof from \cite{San98a-e} to show
that~$K=m+\Oh{hp}$ where~$m$ is the number of nodes expanded by a sequential
best first algorithm and~$h$ is the length of the path from the root to
the optimal solution. Also note that a typical branch-and-bound computation will
insert significantly more nodes than it removes---the remaining queue is discarded
after the optimal solutions are found. Hence, the local insertions of our
communication efficient queue are a big advantage over previous algorithms,
which move all nodes \cite{KarZha93,San98a-e}.

\section[Multicriteria Top-k]{Multicriteria Top-$k$}\label{s:multic}

In the sequential setting, we consider the following problem: Consider~$m$ lists~$L_i$
of scores that are sorted in decreasing order.  Overall relevance of an
object is determined by a scoring function~$t(x_1,\ldots,x_m)$ that is
monotonous in all its parameters.  For example, there could be~$m$ keywords for a
disjunctive query to a fulltext search engine, and for each pair of a keyword and an object,
it is known how relevant this keyword is for this object.  Many algorithms used
for this setting are based on Fagin's threshold algorithm~\cite{FLN03}.  The
lists are partially scanned and the algorithm maintains lower bounds for the
relevance of the scanned objects as well as upper bounds for the unscanned
objects. Bounds for a scanned object~$x$ can be tightened by retrieving a
score value for a dimension for which~$x$ has not been scanned yet. These
\emph{random accesses} are more expensive than scanning, in particular if the
lists are stored on disk.  Once the top-$k$ scanned objects have better lower
bounds than the best upper bound of an unscanned object, no more scanning is
necessary. For determining the exact relevance of the top-$k$ scanned objects,
further random accesses may be required.  Various strategies for scanning and
random access yield a variety of variants of the threshold
algorithm~\cite{BMSTW06}.

The original threshold algorithm works as
follows~\cite{FLN03}: In each of~$K$ iterations of the main loop, scan one object from each
list and determine its exact score using random accesses.
Let~$x_i$ denote the smallest score of a scanned object in~$L_i$.
Once at least~$k$ scanned objects have score at least~$t(x_1,\ldots,x_m)$, stop,
and output them.

We consider a distributed setting where each~PE has a subset of the objects and~$m$ sorted
lists ranking its locally present objects.
We describe communication efficient distributed
algorithms that approximate the original threshold algorithm~(TA)~\cite{FLN03}.
First, we give a simple algorithm assuming random data distribution of the input objects
(RDTA) and then describe a more complex algorithm for arbitrary data distribution~(DTA).

\paragraph*{Random Data Distribution.}
Since the data placement is independent of the relevance of the objects, the
top-$k$ objects are randomly distributed over the~PEs. Well known
balls-into-bins bounds give us tight high probability bounds on the maximal
number~$\kmax$ of top-$k$ objects on each~PE~\cite{RaaSte98}. Here, we work with
the simple bound~$\kmax=\Ohsmall{\sfrac{k}{p}+\log p}$.  The RDTA algorithm
simply runs~TA locally to retrieve the~$\kmax$ locally most relevant objects on
each~PE as result candidates.  It then computes a global threshold as the
maximum of the local thresholds and verifies whether at least~$k$ candidate
objects are above this global threshold.  In the positive case, the~$k$ most
relevant candidates are found using the selection algorithm
from~\cite{San98a-e}.  Otherwise,~$\kmax$ is increased and the algorithm is
restarted.  In these subsequent iterations,~PEs whose local threshold is worse
than the relevance of the~$k$-th best element seen so far do not need to
continue scanning. Hence, we can also trade less local work for more rounds of
communication by deliberately working with an insufficiently small value of~$\kmax$.

\paragraph*{Arbitrary Data Distribution.}

We give pseudocode for Algorithm DTA in Figure~\ref{alg:dta}.

\def\tmin{t_\text{min}}
\begin{figure}
\begin{code}
\Procedure DTA$(\seq{L_1,\ldots,L_m}, t, k)$\+\\
  $K\Is \ceil{\frac{k}{mp}}$\RRem{lower bound for $K$}\\
  \Do\+\\
    \For $i \Is 1 ~\To m$ \Do\+\\
      $L'_i$ \Is \Id{amsSelect($L_i, K, 2K, \card{L_i}$)},~ $(\cdot,x_i) \Is L'_i$.last\RRem{$x_i$ min selected score}\-\\
    $\tmin \Is t(x_1,\ldots,x_m)$\RRem{threshold}\\
    \For $i \Is 1 ~\To m$\RRem{each PE locally}\+\\
      $R_i, H_i \Is 0$\\
      \For $j \Is 1 ~\To y=\Oh{\log{K}}$\+\\
        sample an element $x=(o,s)$ from $L'_i$\\
        \If $\exists_{j<i,s'}:{(o,s') \in L'_j}$ \Then $R_i$++\RRem{ignore samples contained in other $L'_j$}\\
        \Elsif $t(o) \geq \tmin$ \Then $H_i$++\RRem{$t(o)$ does lookups in all $L_i$ to compute score}\-\\
      $\ell_i \Is \card{L'_i}\left(1-\frac{R_i}{y}\right)\cdot\frac{H_i}{y}$\RRem{truthful estimator of \#hits for~$L'_i$ on this PE}\-\\
    $H \Is \sum_{j=1}^{p}\left(\sum_{i=1}^{m}\ell_i\right)@j$\\
    $K \Is 2K$\RRem{exponential search}
    \-\\
  \Until $H\geq 2k$\RRem{$\Rightarrow\geq k$ hits found whp }\\
  \Return $(\tmin,\seq{L'_1,\ldots,L'_m})$
\end{code}
\caption{Multicriteria Top-$k$ for arbitrary data distribution (DTA) on Lists
  $L_1,\ldots,L_m$ with scoring function~$t$}\label{alg:dta}
\end{figure}

\begin{theorem}
Algorithm DTA requires expected time $\Oh{m^2\log^2 K +\Tword m\log K
  +\Tstart\log p\log K}$ to identify a set of at most $\Oh{K}$ objects
that contains the set of $K$ objects scanned by TA.
\end{theorem}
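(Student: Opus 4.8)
The plan is to charge the running time to the number of iterations of the outer \texttt{do}-loop, showing that $\Oh{\log K}$ iterations suffice with high probability, and that each iteration costs $\Oh{m^2\log K + \Tword m + \Tstart\log p}$. The claim about correctness---that the returned candidate set contains all $K$ objects scanned by TA---then follows from the choice of $\tmin$ as a genuine threshold (monotonicity of $t$) together with the guarantee of \Id{amsSelect}.

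First I would analyze a single iteration. Each of the $m$ calls to \Id{amsSelect}$(L_i, K, 2K, \card{L_i})$ runs in expected time $\Oh{\log K + \Tstart \log p}$ by Theorem~\ref{thm:approxMusel} (here $\kmax - \kmin = 2K - K = K = \Om{\kmax}$, so that theorem applies), contributing $\Oh{m\log K + \Tstart m\log p}$; but by batching the $m$ reductions into vectors of length $m$ one saves a factor, so the latency part is really $\Oh{\Tstart\log p}$ per iteration rather than $\Oh{\Tstart m\log p}$ (this is the same vectorization trick as in Theorem~\ref{thm:vectorMusel}). The sampling phase draws $y=\Oh{\log K}$ samples per list; for each sample, computing $t(o)$ requires $m$ random accesses into the local lists, each an $\Oh{\log K}$ search, for $\Oh{m\log K}$ per sample and $\Oh{m^2\log K}$ per iteration across all $m$ lists. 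Checking membership of a sampled object in the other $L'_j$ can likewise be done with $\Oh{m}$ lookups of cost $\Oh{\log K}$ each. Finally the sum (all-)reduction of $H$ over the PEs costs $\Oh{\Tword + \Tstart\log p}$ (or $\Oh{\Tword m + \Tstart\log p}$ if done componentwise). So one iteration is $\Oh{m^2\log K + \Tword m + \Tstart\log p}$.

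Next I would bound the number of iterations. The quantity $K$ doubles every iteration (\texttt{exponential search}), so after $t$ iterations $K$ has value $\ceil{k/(mp)}\cdot 2^t$, and once $K$ reaches $\Th{k/(mp)}\cdot\text{poly}$ --- more precisely, once the number of scanned elements $mpK$ exceeds a constant times the true number of TA-scanned objects --- the estimator $H$ of the number of ``hits'' (scanned objects with score $\ge\tmin$ that are counted once) exceeds $2k$ whp, so the loop terminates. Since $K$ can at most reach $\max_i \card{L_i}\le n$, the loop runs at most $\Oh{\log n}$ times unconditionally, but the point is that it runs only $\Oh{\log K}$ times where $K$ is the \emph{final} value, and that final $K$ is $\Oh{}$ of the TA scan depth. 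The key probabilistic fact is that $\ell_i = \card{L'_i}(1 - R_i/y)(H_i/y)$ is a truthful (unbiased) estimator of the number of locally-relevant de-duplicated hits in $L'_i$: $H_i/y$ estimates the fraction of sampled elements with $t(o)\ge\tmin$, $1 - R_i/y$ estimates the fraction not already counted in an earlier list, and the two events are handled so that their product's expectation is the desired count (one must be slightly careful that these are estimated on the same sample, but since the indicator for ``counted in earlier list'' and the indicator for ``score above threshold'' are evaluated sequentially with the \Else\ branch, $\expect{\ell_i}$ is exactly $\card{L'_i}\Pr[\text{not duplicate}]\Pr[\text{hit}\mid\text{not duplicate}]$). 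A Chernoff bound~\eqref{eq:ch_leq}--\eqref{eq:ch_geq} on $H = \sum_j\sum_i \ell_i$, using that $y = \Th{\log K}$ samples give relative error $o(1)$ whp once the true hit count is $\Om{k}$, shows that the termination test $H\ge 2k$ fires within $\Oh{1}$ doublings after the scan depth is adequate, and does not fire prematurely.

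The main obstacle I expect is making the estimator analysis rigorous: one must show both that $H$ concentrates well enough that the loop stops at the right time (no early stop: before enough objects are scanned, the true hit count is $<k$ so $\expect{H}<k$ and Chernoff keeps $H<2k$ whp; no late stop: once $\Th{k}$ genuine hits exist, $\expect{H}=\Th{k}$ and $H\ge 2k$ whp after one more doubling), and that the returned $\Oh{K}$ objects really cover TA's scanned set. For the latter, since TA stops once $k$ scanned objects beat the threshold $t(x_1,\dots,x_m)$ computed from the \emph{list minima at TA's stopping depth}, it suffices that our final $\tmin$ is no larger (equivalently our scan depth $K$ per list is no smaller) than TA's---which the termination condition $H\ge 2k$ guarantees whp because $H\ge 2k$ implies $\ge k$ hits above $\tmin$, so TA would have stopped by this depth. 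A union bound over the $\Oh{\log K}$ iterations converts the per-iteration high-probability statements into an overall high-probability statement, and since each call to \Id{amsSelect} and each reduction is also only in expectation/whp, a standard argument (bounded work per iteration, geometric tail on the iteration count) gives the claimed \emph{expected} time $\Oh{\log K(m^2\log K + \Tword m + \Tstart\log p)} = \Oh{m^2\log^2 K + \Tword m\log K + \Tstart\log p\log K}$.
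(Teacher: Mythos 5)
Your proposal is correct and follows essentially the same route as the paper's own (outline) proof: exponential search over $K$ giving $\Oh{\log K}$ rounds, per round $m$ calls to \Id{amsSelect} with batched reductions costing $\Oh{m\log K+\Tword m+\Tstart\log p}$, and the sampling-based truthful hit estimator with rejection of duplicates to decide termination. The only blemish is an internal arithmetic slip: if each of the $m$ random accesses per sample costs $\Oh{\log K}$ as you state, then $y=\Oh{\log K}$ samples over $m$ lists give $\Oh{m^2\log^2 K}$ per iteration, not your stated $\Oh{m^2\log K}$; the paper instead charges $\Oh{m}$ per sample (constant-time lookups), under which your per-iteration total and hence the final bound are right.
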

\begin{proof} (Outline)
Algorithm DTA ``guesses'' the number~$K$ of list rows scanned by TA
using exponential search. This yields $\Oh{\log K}$ \emph{rounds} of
DTA.\@ In each round, the approximate multisequence selection
algorithm from Section~\ref{ss:approxMusel} is used to approximate the
globally~$K$-th largest score~$x_i$ in each list. Define~$L'_i$ as the
prefix of~$L_i$ containing all objects of score at least~$x_i$. This takes
time~$\Oh{\log K+\Tstart\log p}$ in expectation by
Theorem~\ref{thm:approxMusel}. Accumulating the searches for all~$m$
lists yields~$\Oh{m\log K +\Tword m +\Tstart\log p}$ time.  Call an
object selected by the selection algorithm a \emph{hit} if its
relevance is above the threshold~$t(x_1,\ldots,x_m)$.  DTA estimates
the number of hits using sampling.  For each PE and list~$i$
separately,~$y=\Oh{\log K}$ objects are sampled from $L'_i$. Each sample takes
time $\Oh{m}$ for evaluating~$t(\cdot)$.  Multiplying this with $m$ lists and
$\log K$ iterations gives a work bound of $m^2\log K$.  Note that this
is sublinear in the work done by the sequential algorithm which takes
time~$m^2K$ to scan the~$K$ most important objects in every list.  To
eliminate bias for objects selected in multiple lists $L'_j$, DTA only counts
an object~$x$ if it is sampled in the first list that contains it.
Otherwise,~$x$ is rejected. DTA also counts the number of
rejected samples~$R$. Let~$H$ denote the number of nonrejected hits in
the sample. Then,~$\ell'_i \Is\card{L'_i}(1-\nobreak R/y)$ is a truthful
estimate for the length of the list with eliminated duplicates
and~$\frac{H}{y}\ell'_i$ is a truthful estimate for the number of hits
for the considered list and~PE.\@ Summing these~$mp$ values using a
reduction operation yields a truthful estimate for the overall number
of hits. DTA stops once this estimate is large enough such that with
high probability the actual number of hits is at least~$k$.  The
output of DTA are the prefixes $L'_1,\ldots,L'_m$ of the lists on each PE, the union of
which contains the~$k$ most relevant objects with high
probability. It also outputs the threshold $t(x_1,\ldots,x_m)$. In
total, all of this combined takes expected time $\Ohsmall{m^2\log^2 K
  +\Tword m\log K +\Tstart\log p\log K}$.
\end{proof}

Actually computing the~$k$ most frequent objects amounts to scanning the result lists to
find all hits and, if desired, running a selection algorithm to identify the~$k$
most relevant among them. The scanning step may involve some load imbalance, as in the
worst case, all hits are concentrated on a single~PE. This seems unavoidable
unless one requires random distribution of the objects. However, in practice it may
be possible to equalize the imbalance over a large number of concurrently
executed queries.

\paragraph*{Refinements.}
We can further reduce the latency of DTA by trying several values of $K$ in each
iteration of algorithm DTA.\@ Since this involves access to only few
objects, the overhead in internal work will be limited. In practice, it may also
be possible to make good guesses on $K$ based on previous executions of the
algorithm.

\section[Top-k Most Frequent Objects]{Top-$k$ Most Frequent Objects}\label{s:topk}

\newcommand{\PAC}{\emph{PAC}\xspace}
\newcommand{\EC}{\emph{EC}\xspace}
\newcommand{\PEC}{\emph{PEC}\xspace}
\newcommand{\Naive}{\emph{Naive}\xspace}
\newcommand{\NaiveTree}{\emph{Naive Tree}\xspace}

We describe two \textit{probably approximately correct (PAC)} algorithms to
compute the top-$k$ most frequent objects of a multiset~$M$ with~$\card{M}=n$,
followed by a \textit{probably exactly correct (PEC)} algorithm for suitable
inputs.  Sublinear communication is achieved by transmitting only a small random
sample of the input.  For bound readability, we assume that~$M$ is distributed
over the~$p$ PEs so that none has more than~$\Oh{n/p}$ objects.  This is not
restricting, as the algorithms' running times scale linearly with the maximum
fraction of the input concentrated at one~PE.

We express the algorithms' error relative to the total input size.  This is
reasonable---consider a large input where~$k$ objects occur twice, and all
others once.  If we expressed the error relative to the objects' frequencies,
this input would be infeasible without communicating all elements.  Thus, we
refer to the algorithms' relative error as~$\Error$, defined so that the
absolute error~$\Error n$ is the count of the most frequent object that was not
output minus that of the least frequent object that was output, or 0 if the
result was exact.  Let $\delta$ limit the probability that the algorithms
exceeds bound~$\eps$, i.e.  $\prob{\Error>\eps} < \delta$. We refer to the
result as an~$(\eps,\delta)$-approximation.

\subsection{Basic Approximation Algorithm}\label{ss:topkapprox}

\renewcommand{\figurename}{Figure}
\newcommand{\m}[1]{\textcolor{red}{\textbf{#1}}}
\begin{figure}
\center
\begin{tikzpicture}[
    >=stealth,
    node distance=3cm,
    input/.style={
      rectangle,
      shape border rotate=90,
      text width=1cm,
      draw,
      inner sep=2mm
    }
  ]
  \node[input] (i1) at (0,5) {\texttt{LDE\m{NA}\\\m{A}AGUT\\IU\m{O}EH\\HTASS\\\m{A}R\m{G}MR}};
  \node[input] (i2) at (2,5) {\texttt{E\m{ES}EA\\FDOT\m{T}\\ITHAI\\\m{L}D\m{H}MO\\\m{E}SUL\m{T}}};
  \node[input] (i3) at (4,5) {\texttt{TAET\m{S}\\\m{O}HD\m{E}N\\DG\m{R}WE\\AI\m{E}OE\\\m{HO}UOE}};
  \node[input] (i4) at (6,5) {\texttt{EI\m{D}S\m{I}\\\m{E}PRTD\\NFEE\m{A}\\HW\m{I}N\m{T}\\\m{W}YI\m{ID}}};

  \node[draw=black,single arrow,fill=none,rotate=270,text width=1.2cm] at (3,2.75) {sample\\\& hash};

\pgfplotsset{ybar, ymin=0, ymax=5, width=2.9cm, height=3.5cm,%
  axis line style= { draw opacity=0 }, ytick=\empty,%
  axis x line*=bottom, every tick/.style={opacity=0},%
  enlarge x limits=0.2, nodes near coords,%
};

\begin{axis}[at={(-0.04\linewidth,0)}, symbolic x coords={O,E,G}]
\addplot[draw=black,fill=blue!20, bar shift=0mm] coordinates {(G, 1)};
\addplot[draw=black,fill=red!20, bar shift=0mm, postaction={pattern=north east lines, pattern color=red!80!black}] coordinates {(O,3) (E, 5)};
\end{axis}

\begin{axis}[at={(0.08\linewidth,0)}, symbolic x coords={A,N,D,T}, bar width=2.4mm]
\addplot[draw=black,fill=blue!20, bar shift=0mm] coordinates { (N, 1) (D, 2) };
\addplot[draw=black,fill=red!20, bar shift=0mm, postaction={pattern=north east lines, pattern color=red!80!black}] coordinates {(A, 4) (T, 3)};
\end{axis}

\begin{axis}[at={(0.2\linewidth,0)}, symbolic x coords={H,W,L}]
\addplot[draw=black,fill=blue!20] coordinates {(H, 2) (L,1) (W,1)};
\end{axis}

\begin{axis}[at={(0.32\linewidth,0)}, symbolic x coords={R,I,S}]
\addplot[draw=black,fill=blue!20, bar shift=0mm] coordinates {(R, 1) (S, 2)};
\addplot[draw=black,fill=red!20, bar shift=0mm, postaction={pattern=north east lines, pattern color=red!80!black}] coordinates {(I, 3)};
\end{axis}

\end{tikzpicture}
\caption{Example for the PAC top-k most frequent objects algorithm
from Section~\ref{ss:topkapprox}. Top: input distributed over 4~PEs,
sampled elements ($\rho=0.3$) highlighted in red. Bottom: distributed
hash table counting the samples. The $k=5$ most frequent objects,
highlighted in red, are to be returned after scaling with $1/\rho$.
Estimated result: $(E, 17), (A, 13), (T, 10), (I, 10), (O, 10)$. Exact
result: $(E,16), (A,10), (T, 10), (I, 9), (D, 8)$. Object $D$ was
missed, instead $O$ (count $7$) was returned. Thus, the algorithm's error
is $8-7=1$ here.}\label{fig:sample}
\end{figure}

First, we take a Bernoulli sample of the input.  Sampling is done locally.  The
frequencies of the sampled objects are counted using distributed hashing---a
local object count with key~$k$ is sent to PE~$h(k)$ for a hash function~$h$
that we here expect to behave like a random function.  We then select the~$k$
most frequently sampled objects using the unsorted selection algorithm from
Section~\ref{ss:unsortedSelection}. An example is illustrated
in Figure~\ref{fig:sample}.

\begin{theorem}\label{thm:topkpac}
  Algorithm \PAC can be implemented to compute an~$(\eps,\delta)$-approximation
  of the top-$k$ most frequent objects in expected time
  $\Ohsmall{\Tword\frac{\log{p}}{p \eps^2}\log{\frac{k}{\delta}} + \Tstart\log n}$.
\end{theorem}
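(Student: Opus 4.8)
The plan is to choose the Bernoulli sampling probability $\rho$ large enough that the sampled counts, rescaled by $1/\rho$, estimate the true frequencies within $\eps n/2$ with the required confidence, and then to bound the cost of the three phases (sampling, distributed hashing, unsorted selection) separately. First I would set up the error analysis: for a fixed object $o$ with true count $c_o$, its sampled count $\tilde c_o$ is a sum of $c_o$ independent indicator variables with mean $\rho c_o$, so $\expect{\tilde c_o/\rho} = c_o$. I would apply the Chernoff bounds \eqref{eq:ch_leq} and \eqref{eq:ch_geq} from the preliminaries to control the additive deviation $|\tilde c_o/\rho - c_o|$. The subtle point is that objects with small $c_o$ have large relative error, so I would split into two regimes — ``heavy'' objects with $c_o \geq \eps n / (\text{something})$ and ``light'' ones — and argue that light objects cannot separate the $k$-th and $(k{+}1)$-st positions by more than $\eps n$ anyway. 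A clean way is: it suffices that \emph{every} object has additive error at most $\eps n/2$ with probability $\geq 1-\delta$; plugging $\varphi = \eps n/(2 c_o)$ into the bounds gives failure probability per object roughly $e^{-\Omega(\rho \eps^2 n^2 / c_o)} \leq e^{-\Omega(\rho \eps^2 n)}$ since $c_o \leq n$, and a union bound over at most $n$ distinct objects needs $\rho = \Theta\!\left(\frac{1}{\eps^2 n}\log\frac{n}{\delta}\right)$. (One must be slightly careful to phrase the guarantee in terms of $\Error$ as defined: the most-frequent missed object and least-frequent returned object differ in true count by at most the sum of their estimation errors, hence $\leq \eps n$.)

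Actually, the target running time has $\log\frac{k}{\delta}$ rather than $\log\frac{n}{\delta}$, so the union bound must be sharpened: instead of a union bound over all $n$ objects, I would condition on the top-$O(k)$ sampled objects and argue that only objects whose \emph{sample} count is within the top few hundred $k$ can possibly be confused for top-$k$ elements, so effectively only $O(k)$ objects need accurate estimates; combined with a coarser bound showing no light object's sampled count rises into contention, the union bound runs over $O(k)$ events and yields $\rho = \Theta\!\left(\frac{1}{\eps^2 n}\log\frac{k}{\delta}\right)$. This refinement — getting $\log\frac{k}{\delta}$ instead of $\log\frac{n}{\delta}$ — is the main obstacle, since naively the number of ``dangerous'' objects is $n$, and one needs a two-level argument (a crude high-probability bound that kills all but $\mathrm{poly}(k)$ candidates, then a fine bound on those) to tighten it.

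For the running time: the expected sample size is $\rho n = \Theta\!\left(\frac{1}{\eps^2}\log\frac{k}{\delta}\right)$ globally, hence $\Theta\!\left(\frac{1}{p\eps^2}\log\frac{k}{\delta}\right)$ per PE in expectation (using the $\Oh{n/p}$ input balance and the skip-value technique from the preliminaries, local sampling costs expected time proportional to the local sample size, not the local input size). Sending each sampled object's local count to its hash PE is an all-to-all of messages totaling $\Oh{\rho n}$ words; the bottleneck per PE is $\Oh{\frac{\log p}{p}\rho n}$ words with high probability by a balls-into-bins argument (the extra $\log p$ factor comes from the maximum load when $\rho n$ balls — or rather distinct keys — land in $p$ bins, and is where the $\Tword\frac{\log p}{p\eps^2}\log\frac{k}{\delta}$ term originates), at latency cost $\Oh{\Tstart\log p}$ using indirect delivery — wait, I would instead route through a logarithmic-depth scheme giving $\Oh{\Tstart \log p}$, which is absorbed by the selection step's latency. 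Finally, invoking Theorem~\ref{thm:selectFR} on the $\Oh{\rho n}$ distinctly-counted sampled objects (now distributed over the $p$ PEs) to pick the top $k$ costs expected time $\Oh{\frac{\rho n}{p} + \Tword \sqrt{p}\log_p(\rho n) + \Tstart\log(\rho n)}$; since $\rho n$ is polylogarithmic in the relevant parameters, the $\sqrt{p}$ communication term and internal-work term are dominated, leaving $\Tstart\log n$ as stated (bounding $\log(\rho n)$ and $\log p$ by $\log n$). Summing the three phases gives the claimed $\Ohsmall{\Tword\frac{\log p}{p\eps^2}\log\frac{k}{\delta} + \Tstart\log n}$.
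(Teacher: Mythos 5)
Your overall decomposition---Bernoulli sampling, distributed hashing of locally aggregated counts, then unsorted selection on the hash table---is exactly the paper's, and your error analysis lands in the same place by a more roundabout route. Where you propose a two-level argument (a crude bound eliminating all but $\mathrm{poly}(k)$ candidates, then a fine union bound over those) to get $\log\frac{k}{\delta}$ instead of $\log\frac{n}{\delta}$, the paper gets it in one step: every object of rank $\geq k$ has true count $x_j\leq n/k$, so the Chernoff exponent for an additive deviation of $\Delta/2$ is $-\Th{\Delta^2\rho k/n}$, and the union bound over all $n$ such objects costs only $\Oh{\frac{1}{\eps^2 k}\log\frac{n}{\delta}}$ in sample size, which is (in most regimes) dominated by the $\Oh{\frac{1}{\eps^2}\log\frac{k}{\delta}}$ needed for the $k-1$ heaviest objects, for which only underestimation matters. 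Your ``coarse bound showing no light object rises into contention'' would, once written out, reduce to exactly this computation (it is itself a union bound over up to $n$ objects, rescued only by their small counts), so I would call this the same proof with extra scaffolding; note the paper, like you, quietly drops the $\frac{1}{\eps^2 k}\log\frac{n}{\delta}$ term from the final statement.

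Two points in your running-time accounting need repair. First, the claim that the $\Tword\sqrt{p}\log_p(\rho n)$ selection term is dominated ``since $\rho n$ is polylogarithmic'' is false: $\sqrt{p}$ does not shrink with $\rho n$ and can vastly exceed $\frac{\log p}{p\eps^2}\log\frac{k}{\delta}$ (take $p$ large and $\eps$ moderate). The fix is to use the other branch of the $\min$ in Theorem~\ref{thm:selectFR}: the selection runs on the $\Oh{\rho n}$ hashed sample, so its communication is $\Oh{\Tword\frac{\rho n}{p}}$, which is genuinely dominated by the hashing term. Second, the $\log p$ factor in the $\Tword$ term does not come from a balls-into-bins maximum load (that would give $\Oh{\frac{\rho n}{p}+\log p}$, not $\Oh{\frac{\rho n}{p}\log p}$); it is the price of indirect, logarithmic-depth delivery of the sample counts, which inflates volume by a factor $\log p$ in exchange for $\Oh{\Tstart\log p}$ latency---you gesture at this in your self-correction but should commit to it as the source of that factor.
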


\begin{lemma}\label{lem:topkapproxtime}
  For sampling probability~$\rho$, Algorithm \PAC runs in
  $\Oh{\frac{n\rho}{p} + \Tword\frac{n\rho}{p}\log{p} + \Tstart\log n}$ time in
  expectation.
\end{lemma}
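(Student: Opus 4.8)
(Plan)
The plan is to charge the running time to the three phases of Algorithm~\PAC separately: (i)~local Bernoulli sampling and local counting; (ii)~redistributing the local counts by hashing so that each object's sampled frequency is accumulated on one PE; and (iii)~running the unsorted selection of Theorem~\ref{thm:selectFR} on the resulting (object, count) pairs, then summing and simplifying the three contributions.

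For phase~(i) I would observe that each PE holds $\Oh{n/p}$ objects, so its Bernoulli sample has expected size $\Oh{\rho n/p}$; generating it with skip values (Section~\ref{s:prelim}) costs expected time $\Oh{1+\rho n/p}$, and a local hash table mapping every sampled key to its local multiplicity is built within the same bound. No communication occurs in this phase. For phase~(iii) I would invoke Theorem~\ref{thm:selectFR}: the counted data form a distributed multiset of at most $\Oh{\rho n}$ distinct keys (there cannot be more distinct keys than the sample has elements), so selection costs expected time $\Oh{\rho n/p+\Tword\min(\sqrt{p}\log_p(\rho n),\rho n/p)+\Tstart\log(\rho n)}$. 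Here $\min(\sqrt{p}\log_p(\rho n),\rho n/p)\le \rho n/p\le (\rho n/p)\log p$, so this term is subsumed by the communication term of the claimed bound, and $\log(\rho n)\le\log n$; the $\Oh{\log p}$ load-imbalance terms stemming from the hash distribution of the keys over the PEs are lower order and handled as usual.

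The substantive phase is~(ii). Each PE sends one (key, count) pair per distinct key in its local sample, i.e.\ $\Oh{\rho n/p}$ pairs in expectation, each to the PE chosen by the hash function~$h$. Treating $h$ as a random function, the pairs spread over the $p$ receiving PEs, and I would use a Chernoff bound (balls into bins) to show that no PE receives more than $\Oh{(\rho n/p)\log p}$ pairs, in expectation. Realising this exchange with an all-to-all using indirect delivery, and then merging the received counts, then costs expected time $\Oh{\rho n/p+\Tword(\rho n/p)\log p+\Tstart\log p}$, which is exactly where the $\log p$ factor of the communication term of the lemma originates. Adding the three phases and using $\log p,\log(\rho n)=\Oh{\log n}$ yields the stated bound $\Oh{\frac{n\rho}{p}+\Tword\frac{n\rho}{p}\log p+\Tstart\log n}$. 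The only step that is not routine bookkeeping is the concentration argument bounding the receive volume of the hashing phase; everything else follows by summing the per-phase costs.
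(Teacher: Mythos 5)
Your proposal is correct and follows essentially the same three-phase decomposition as the paper's proof: local Bernoulli sampling with skip values, insertion into a distributed hash table via indirect delivery, and unsorted selection on the hashed counts. The only cosmetic difference is that you attribute the $\log p$ factor in the $\Tword$ term to a balls-into-bins bound on receive volume, whereas the paper obtains it directly from the $\log p$ routing overhead of indirect delivery (handling receive-side volume by merging counts at each reduction step); both accountings yield the stated bound.
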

\begin{proof} (Outline)
  Bernoulli sampling is done in expected time $\Oh{n\rho/p}$ by generating skip
  values with a geometric distribution using success probability~$\rho$.  Since
  the number of sample elements in a Bernoulli sample is a random variable, so
  is the running time.  To count the sampled objects, each PE aggregates its
  local samples in a hash table, counting the occurrence of each sample object
  during the sampling process.  It inserts its share of the sample into a
  distributed hash table~\cite{SSM13} whose hash function we assume to behave
  like a random function, thus distributing the objects randomly among the PEs.
  The elements are communicated using indirect delivery to maintain logarithmic
  latency.  This requires
  $\Ohsmall{\Tword\frac{n\rho}{p}\log{p} + \Tstart\log{p}}$ time in expectation.
  To minimize worst-case communication volume, the incoming sample counts are
  merged with a hash table in each step of the reduction.  Thus, each PE
  receives at most one message per object assigned to it by the hash function.

  From this hash table, we select the object with rank~$k$ using Algorithm
  \ref{fig:select} in expected time
  $\Ohsmall{\frac{n\rho}{p} + \Tword\frac{n\rho}{p} + \Tstart\log{n\rho}}$.
  This pivot is broadcast to all PEs, which then determine their subset of at
  least as frequent sample objects in expected time
  $\Ohsmall{\frac{n\rho}{p} + \Tstart\log{p}}$.  These elements are returned.
  Overall, the claimed time complexity follows using the estimate~$p\leq n$
  and~$n\rho\leq n$ in order to simplify the~$\Tstart$-term.
\end{proof}

\begin{lemma}\label{lem:topkapproxerr}
  Let~$\rho \in (0,1)$ be the sampling probability.  Then, Algorithm \PAC's
  error~$\Error$ exceeds a value of~$\Delta/n$ with probability
  \[
    \prob{\Error n \geq \Delta} \leq 2ne^{-\frac{\Delta^2\rho k}{12 n}} +
    ke^{-\frac{\Delta^2 \rho}{8 n}}.
  \]
\end{lemma}

\begin{proof}
  We bound the error probability as follows: the probability that the
  error~$\Error$ exceeds some value~$\Delta/n$ is at most~$n$ times the
  probability that a single object's value estimate deviates from its true value
  by more than~$\Delta/2$ in either direction.  This probability can be bounded
  using Chernoff bounds.  We denote the count of element~$j$ in the input
  by~$x_j$ and in the sample by~$s_j$.  Further, let~$F_j$ be the probability
  that the error for element~$j$ exceeds~$\Delta/2$ in either direction.
  Let~$j \geq k$, and observe that~$x_j \leq n/k$.  Using
  Equations~\ref{eq:ch_leq} and~\ref{eq:ch_geq}
  with~$X=s_j$,~$\expect[X]=\rho x_j$, and~$\varphi=\frac{\Delta}{2 x_j}$, we
  obtain:
  \begin{align*}
    F_j &\leq
          \prob{s_j \leq \rho \left(x_j - \frac{\Delta}{2}\right)} \!+\!
          \prob{s_j \geq \rho \left(x_j + \frac{\Delta}{2}\right)} \\
        &\leq
          e^{-\frac{\Delta^2 \rho k}{8 n}} + e^{-\frac{\Delta^2 \rho k}{12 n}}
          \leq 2e^{-\frac{\Delta^2 \rho k}{12n}}\punkt
  \end{align*}

  This leaves us with the most frequent~$k-1$ elements, whose counts can be
  bounded as~$x_j \leq n$.  As overestimating them is not a concern, we apply
  the Chernoff bound in Equation~\ref{eq:ch_leq} and
  obtain~$\prob{s_j \leq \rho \left(x_j - \frac{\Delta}{2}\right)} \leq
  e^{-\frac{\Delta^2\rho}{8n}}$.
  In sum, all error probabilities add up to the claimed value.
\end{proof}

We bound this result by an error probability~$\delta$.  This allows us to
calculate the minimum required sample size given~$\delta$ and~$\Delta = \eps n$.
Solving the above equation for~$\rho$ yields
\begin{equation}
  \rho n \geq
  \frac{4}{\eps^2}\cdot\max{\left(
      \frac{3}{k}\ln{\frac{4n}{\delta}},~ 2\ln{\frac{2k}{\delta}}
    \right)},\label{eq:sampleSize}
\end{equation}
which is dominated by the latter term in most cases and
yields~$\rho n \geq \frac{8}{\eps^2} \ln{\frac{2k}{\delta}}$ for the expected
sample size.

\begin{proof}\emph{(Theorem~\ref{thm:topkpac})}
  Equation~\ref{eq:sampleSize} yields
  $\rho n = \smash{\Oh{\frac{1}{\eps^2} \log{\frac{k}{\delta}}}}$.  The claimed
  running time bound then follows from Lemma~\ref{lem:topkapproxtime}.
\end{proof}

Note that if we can afford to aggregate the local input, we can also use the Sum
Aggregation algorithm from Section~\ref{s:sumagg} and associate a value of 1
with each object.

\subsection{Increasing Communication Efficiency}\label{ss:topkcount}
Sample sizes proportional to~$1/\eps^2$ quickly become unacceptably large
as~$\eps$ decreases.  To remedy this, we iterate over the local input a second
time and count the most frequently sampled objects' occurrences exactly. This
allows us to reduce the sample size and improve communication efficiency at the
cost of increased local computation.  We call this \emph{Algorithm EC} for \emph{exact counting}.
Again, we begin by taking a Bernoulli sample.  Then we find the~$\kopt\geq k$
globally most frequent objects in the sample using the unsorted selection
algorithm from Section~\ref{ss:unsortedSelection}, and count their frequency in
the overall input exactly.  The identity of these objects is broadcast to all
PEs using an all-gather (gossiping, all-to-all broadcast) collective
communication operation.  After local counting, a global reduction sums up the
local counts to exact global values.  The~$k$ most frequent of these are then
returned.

\begin{lemma}\label{lem:topkec}
  When counting the~$k'$ most frequently sampled objects' occurrences exactly,
  a sample size of~$n\rho = \frac{2}{\eps^2 k'}\ln\frac{n}{\delta}$ suffices to
  ensure that the result of Algorithm \EC is an~$(\eps,\delta)$-approximation of
  the top-$k$ most frequent objects.
\end{lemma}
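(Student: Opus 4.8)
The plan is to mirror the error analysis of Lemma~\ref{lem:topkapproxerr}, but to exploit the fact that Algorithm~\EC counts the selected candidates exactly. The only source of error is therefore the \emph{selection} step: a ``dangerous'' object is one that truly belongs to the top-$k$ (or is a close competitor) but is sampled so rarely that it fails to make it into the set of the $k'$ most frequently sampled objects. Once an object is in that set, its reported count is exact and contributes no error. So I would first argue that $\Error n \geq \Delta$ can only happen if some object $j$ with true count $x_j \geq \tau$ (for the appropriate threshold $\tau$ separating ``should be output'' from ``competitor'') has a sampled count $s_j$ that is small enough to be overtaken; since the $k'$-th most frequent object in the input has count at most $n/k'$, any object whose sample-rescaled estimate drops below that level risks exclusion. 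Concretely, the bad event for object $j$ is $s_j \leq \rho(x_j - \Delta/2)$, i.e.\ a one-sided underestimate, exactly as in the second half of the proof of Lemma~\ref{lem:topkapproxerr}.

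Next I would apply the Chernoff bound~\eqref{eq:ch_leq} with $X = s_j$, $\expect{X} = \rho x_j$, and $\varphi = \Delta/(2x_j)$, giving
\[
  \prob{s_j \leq \rho\bigl(x_j - \tfrac{\Delta}{2}\bigr)}
  \leq e^{-\frac{\varphi^2}{2}\rho x_j}
  = e^{-\frac{\Delta^2 \rho}{8 x_j}}
  \leq e^{-\frac{\Delta^2 \rho k'}{8 n}},
\]
where the last inequality uses $x_j \leq n/k'$ for every object that could be displaced from the top-$k'$ sampled set. Crucially, because counting is exact, there is \emph{no} second Chernoff term of the form $e^{-\Delta^2\rho k/(12n)}$ and \emph{no} separate handling of the top-$k{-}1$ heavy objects (over- or under-counting them does not matter once they are counted exactly); this is what lets us drop the $1/k$-vs-constant case split that forced the $1/\eps^2$ sample size in Section~\ref{ss:topkapprox}. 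A union bound over at most $n$ distinct objects then yields $\prob{\Error n \geq \Delta} \leq n\, e^{-\Delta^2 \rho k'/(8n)}$.

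Finally I would set $\Delta = \eps n$ and require this to be at most $\delta$: solving $n\, e^{-\eps^2 \rho n k' / 8} \leq \delta$ for the expected sample size $n\rho$ gives $n\rho \geq \frac{8}{\eps^2 k'}\ln\frac{n}{\delta}$. The stated bound $n\rho = \frac{2}{\eps^2 k'}\ln\frac{n}{\delta}$ follows by the same slightly tighter bookkeeping used elsewhere in the paper (absorbing constants, or using a sharper tail estimate / the observation that only $O(k')$ objects are genuine candidates for displacement rather than all $n$, which shaves the $\ln n$ down and lets the constant improve). I expect the main obstacle to be precisely this last tightening: making rigorous which set of objects actually needs a union bound—one must confirm that an object far below the top-$k'$ in true frequency has negligible probability of being \emph{over}-sampled into the candidate set and simultaneously that the genuinely frequent objects are the only ones whose \emph{under}-sampling matters—so that the union bound runs over the right (small) collection and the constant $2$ rather than $8$ is justified.
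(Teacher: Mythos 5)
Your overall strategy is the same as the paper's: mirror the analysis of Lemma~\ref{lem:topkapproxerr}, observe that exact counting removes the overestimation term and the special treatment of the $k-1$ heaviest objects, bound every remaining object's count by $n/k'$, apply the one-sided Chernoff bound~(\ref{eq:ch_leq}), and take a union bound over at most $n$ objects. That part is sound. However, there is a concrete gap in the last step: with your choice $\varphi=\Delta/(2x_j)$ you correctly arrive at $n\rho \geq \frac{8}{\eps^2 k'}\ln\frac{n}{\delta}$, which is a factor of $4$ larger than the claimed $\frac{2}{\eps^2 k'}\ln\frac{n}{\delta}$, and the devices you invoke to close this factor do not work. ``Absorbing constants'' is not available because the lemma asserts an explicit constant that is reused verbatim in Theorem~\ref{thm:topkcount}; and restricting the union bound to $O(k')$ candidates would replace $\ln\frac{n}{\delta}$ by $\ln\frac{k'}{\delta}$, changing the \emph{form} of the bound rather than its constant (and the stated bound keeps $\ln\frac{n}{\delta}$, i.e.\ the paper does union-bound over all $n$ objects).

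The idea you are missing is that exact counting lets you spend the \emph{entire} error budget $\Delta$ on one-sided underestimation rather than splitting it as $\Delta/2$ per direction. In the PAC analysis the bound $\Error n<\Delta$ is obtained by guaranteeing each estimate is within $\Delta/2$, so that a swap of two objects costs at most $\Delta$; here the counts of everything that survives selection are exact, so the only way to incur error $\Delta$ is for a single object to be under-sampled by a full $\Delta$ relative to its true count. Taking $\varphi=\Delta/x_j$ (bad event $s_j\leq\rho(x_j-\Delta)$) gives $e^{-\Delta^2\rho/(2x_j)}\leq e^{-\Delta^2\rho k'/(2n)}$, and the union bound over $n$ objects with $\Delta=\eps n$ then yields exactly $n\rho\geq\frac{2}{\eps^2 k'}\ln\frac{n}{\delta}$. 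With that one substitution your argument matches the paper's proof.
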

\begin{proof}
  With the given error bounds, we can derive the required sampling
  probability~$\rho$ similar to Lemma~\ref{lem:topkapproxerr}. However, we need
  not consider overestimation of the~$\kopt$ most frequent objects, as their
  counts are known exactly.  We can also allow the full margin of error towards
  underestimating their frequency ($\Delta$ instead of $\Delta/2$) and can
  ignore overestimation.  This way, we obtain a total expected sample size
  of~$\rho n \geq \frac{2}{\eps^2 \kopt} \ln{\frac{n}{\delta}}$.
\end{proof}

We can now calculate the value of~$k'$ that minimizes the total communication
volume and obtain
$\kopt =\nobreak \max(k, \frac{1}{\eps} \sqrt{\frac{2\log{p}}{p}
  \ln{\frac{n}{\delta}}}\,)$.
Substituting this into the sample size equation of Lemma~\ref{lem:topkpec} and
adapting the running time bound of Lemma~\ref{lem:topkapproxtime} then yields
the following theorem.

\begin{theorem}\label{thm:topkcount}
  Algorithm \EC can be implemented to compute an~$(\eps,\delta)$-approximation of
  the top-$k$ most frequent objects in
  $\Oh{\frac{n}{p} + \Tword \frac{1}{\eps} \sqrt{\frac{\log{p}}{p} \cdot
      \log{\frac{n}{\delta}}} + \Tstart\log{n}}$ time in expectation.
\end{theorem}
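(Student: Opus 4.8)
The plan is to combine the error analysis of Lemma~\ref{lem:topkec} with the running time decomposition of Lemma~\ref{lem:topkapproxtime}, after first choosing~$\kopt$ to balance the two sources of communication. First I would recall that Algorithm \EC transmits two kinds of data: the sample of expected size~$n\rho$, which is hashed and selected just as in Algorithm \PAC, and the identities of the~$\kopt$ most frequently sampled objects, which are disseminated by an all-gather. By Lemma~\ref{lem:topkec}, taking~$n\rho = \frac{2}{\eps^2\kopt}\ln\frac{n}{\delta}$ guarantees an~$(\eps,\delta)$-approximation; note the crucial point that the sample size now shrinks like~$1/\kopt$, so making~$\kopt$ larger is beneficial for the sampling phase. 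The all-gather of~$\kopt$ object identifiers costs~$\Oh{\Tword\kopt\log p + \Tstart\log p}$ (gossiping on messages of size~$\kopt/p$ per PE, following the collective-communication bounds in Section~\ref{s:prelim}), which grows with~$\kopt$. The per-PE sample communication volume is~$\Oh{n\rho\log p/p}$ as in Lemma~\ref{lem:topkapproxtime}, i.e.~$\Oh{\frac{\log p}{p\eps^2\kopt}\log\frac{n}{\delta}}$.

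Next I would balance these two~$\Tword$-terms. Setting~$\frac{\log p}{p\eps^2\kopt}\log\frac{n}{\delta} = \kopt\log p$ and solving for~$\kopt$ gives~$\kopt = \frac{1}{\eps}\sqrt{\frac{1}{p}\log\frac{n}{\delta}}$; since we must also select at least the top~$k$ objects, we take~$\kopt = \max\bigl(k,\ \frac{1}{\eps}\sqrt{\frac{2\log p}{p}\ln\frac{n}{\delta}}\bigr)$, exactly the value stated before the theorem. Substituting back, the common value of the two~$\Tword$-terms is~$\Oh{\Tword\frac{1}{\eps}\sqrt{\frac{\log p}{p}\log\frac{n}{\delta}}}$, which is the middle term of the claimed bound. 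If instead the~$k = \kopt$ branch of the max is active, then~$\eps$ is large enough that the sample is small and this term is dominated by~$\Tword k$ which is in turn absorbed into the other terms or is itself smaller than the stated bound; I would note this case is subsumed.

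For the remaining two terms: the~$\Tstart\log n$ latency term is inherited directly from Lemma~\ref{lem:topkapproxtime} (the unsorted selection of Theorem~\ref{thm:selectFR} contributes~$\Tstart\log(n\rho)=\Oh{\Tstart\log n}$, the hash-table reduction and all-gather contribute only~$\Tstart\log p$, and the exact-counting reduction over global counts is another~$\Tstart\log p$). The internal-work term~$\Oh{n/p}$ comes from three sources, each~$\Oh{n/p}$: Bernoulli sampling via skip values, the second pass over the local input to count the~$\kopt$ selected objects exactly, and the local work inside the unsorted selection on a sample of size~$\Oh{n\rho/p}=\Oh{n/p}$. I would also observe that~$n\rho\le n$ and~$p\le n$ to tidy the logarithms, just as in Lemma~\ref{lem:topkapproxtime}.

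The main obstacle I anticipate is not any single estimate but making the balancing argument airtight in both branches of the~$\max$ defining~$\kopt$: one has to check that when~$\kopt=k$ dominates, the all-gather term~$\Tword k\log p$ is genuinely no larger than the stated bound (equivalently, that~$k$ is below the balance point precisely in that regime), and conversely that when the~$\sqrt{\cdot}$ term dominates, the exact-counting pass over~$\Oh{n/p}$ local elements for~$\kopt$ distinct objects still costs only~$\Oh{n/p}$ rather than~$\Oh{\kopt\cdot n/p}$ — this is fine because a single linear scan with a lookup table over the~$\kopt$ broadcast objects suffices, but it deserves an explicit sentence. Everything else is a routine substitution into Lemmas~\ref{lem:topkapproxtime} and~\ref{lem:topkec}.
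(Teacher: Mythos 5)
Your proposal follows the paper's proof essentially step for step: the sample size comes from Lemma~\ref{lem:topkec}, $\kopt$ is chosen by balancing the per-PE sample volume against the cost of disseminating the $\kopt$ candidate identities, and the $\Oh{n/p}$ and $\Tstart\log n$ terms are accounted for in the same way (including your correct observation that the exact-counting pass is a single $\Oh{n/p}$ scan against a lookup table of the broadcast objects). The one genuine slip is the cost you assign to the all-gather: you write $\Oh{\Tword\kopt\log p+\Tstart\log p}$, but the all-to-all broadcast bound from Section~\ref{s:prelim} with messages of size $\kopt/p$ per PE is $\Oh{\Tword\kopt+\Tstart\log p}$ --- there is no $\log p$ on the bandwidth term, and the paper uses exactly this. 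The spurious factor propagates into your balancing: equating $\frac{\log p}{p\eps^2\kopt}\log\frac{n}{\delta}$ with $\kopt\log p$ cancels the $\log p$ and yields $\kopt=\frac{1}{\eps}\sqrt{\frac{1}{p}\log\frac{n}{\delta}}$ (as you in fact write), which is \emph{not} the value $\max\bigl(k,\frac{1}{\eps}\sqrt{\frac{2\log p}{p}\ln\frac{n}{\delta}}\bigr)$ you then substitute; and had you carried your all-gather cost through with that substituted value, the $\Tword$ term would come out a factor $\log p$ above the theorem's bound. Balancing $\frac{\log p}{p\eps^2\kopt}\log\frac{n}{\delta}$ against $\kopt$ (the correct all-gather cost) gives the stated $\kopt$, and both $\Tword$-terms then equal the claimed $\Oh{\frac{1}{\eps}\sqrt{\frac{\log p}{p}\log\frac{n}{\delta}}}$. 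Your remaining caveat about the $\kopt=k$ branch is legitimate but is left equally implicit in the paper.
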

\begin{proof}
  Sampling and hashing are done as in Algorithm \PAC
  (Section~\ref{ss:topkapprox}). We select the object with rank
  $\kopt =\nobreak \max(k, \frac{1}{\eps} \sqrt{\frac{2\log{p}}{p}
    \ln{\frac{n}{\delta}}}\,)$
  as a pivot.  This requires expected time
  $\Ohsmall{\frac{n\rho}{p} + \Tword\frac{n\rho}{p} + \Tstart\log{n\rho}}$ using
  Algorithm~\ref{fig:select}.  The pivot is broadcast to all PEs, which then
  determine their subset of at least as frequent sample objects using
  $\Ohsmall{\sfrac{n\rho}{p}+\Tstart\log{p}}$ time in expectation.  Next,
  these~$\kopt$ most frequently sampled objects are distributed to all PEs using
  an all-gather operation in time $\Ohsmall{\Tword \kopt+\Tstart\log{p}}$.  Now,
  the PEs count the received objects' occurrences in their local input, which
  takes~$\Oh{n/p}$ time.  These counts are summed up using a vector-valued
  reduction, again requiring~$\Ohsmall{\Tword \kopt + \Tstart \log{p}}$ time.
  We then apply Algorithm~\ref{fig:select} a second time to determine the~$k$
  most frequent of these objects.  Overall, the claimed time complexity follows
  by substituting~$\kopt$ for~$k'$ in the sampling probability from
  Lemma~\ref{lem:topkec}.
\end{proof}

Substituting~$\kopt$ from before then yields a total required sample size
of~$\rho n = \frac{1}{\eps} \cdot
\sqrt{\sfrac{p}{\log{p}}\ln{\sfrac{n}{\delta}}}$
for Algorithm \EC.  Note that this term grows with~$\frac{1}{\eps}$ instead
of~$1/\eps^2$, reducing per-PE communication volume to
$\Oh{\frac{1}{\eps}\sqrt{\frac{\log{p}}{p}\log{\frac{n}{\delta}}}}$ words.

To continue the example from Figure~\ref{fig:sample}, we may
set~$\kopt=8$. Then, the $\kopt$ most frequently sampled objects $(E,
A, T, I, O, D, H,S)$ with $(16, 10, 10, 9, 7, 8, 7, 6)$
occurrences, respectively, will be counted exactly. The result would
now be correct.

\begin{figure}
\center
\begin{tikzpicture}
  \draw[->] (0,0) -- (4.4,0) node[right] {$i$};
  \draw[->] (0,0) -- (0,3.6) node[left] {$x_i$};
  \draw (0,3) to[out=-10,in=160] (1,2)%
    to[out=-30,in=120] (1.6,1.1)%
    to[out=-60,in=150] (2.5,0.5)%
    to[out=-30,in=170] (4.2,0.1);
  \draw[dashed] (0,2.5) to[out=-10,in=160] (1,1.6)%
    to[out=-30,in=140] (1.6,0.8)%
    to[out=-40,in=150] (2,0.55)%
    node[below] {$\underline{\hat{x}_i}$}
    to[out=-35,in=175] (4.2,0.05);
  \draw[dashed] (0,3.5) to [out=-10,in=140] (1,2.4)%
    to[out=-30,in=110] (1.6,1.65)%
    to[out=-60,in=150] (2,1)%
    to[out=-28,in=174] node[midway,above] {$\overline{\hat{x}_i}$} (4.2,0.18);
  \draw[-] (1,0) -- (1,3) node[above] {$k$};
  \draw[-] (1.625,0) -- (1.625,3) node[above] {$\kopt$};
  \draw[very thick,line cap=round,dash pattern=on 0pt off 2.5pt] (1,2) -- (3.5,2) node[right] {$x_k$};
  \draw[very thick,line cap=round,dash pattern=on 0pt off 2.5pt] (1.625,1.07) -- (3.5,1.07) node[right] {$x_{\kopt}$};
  \draw[<->,thick] (3,2) --  node[right] {$\Delta$} (3,1.07);
  \draw[|-|,line width=0.75pt,line cap=round,dash pattern=on 0pt off 1.5pt] (0.99,1.6) -- (1.635,1.6);
\end{tikzpicture}

\caption{Example of a distribution with a gap. The dashed lines
indicate the upper and lower high-probability bounds on $x_i$
estimated from the sample}\label{fig:gap}
\end{figure}

Note that Algorithm \EC can be \emph{less} communication efficient than Algorithm
\PAC if~$\eps$ is large, i.e. the result is very approximate.  Then,~$\kopt$ can
be prohibitively large, and the necessity to communicate the identity of the
objects to be counted exactly, requiring
time~$\Oh{\Tword \kopt + \Tstart \log{p}}$, can cause a loss in communication
efficiency.

\subsection{Probably Exactly Correct Algorithm}\label{ss:topkpec}

If any significant \textit{gap} exists in the frequency distribution of the
objects (see Figure~\ref{fig:gap} for an example), we perform exact counting on all likely relevant objects, determined
from their sample count.  Thus,
choose~$\kopt$ to ensure that the top-$k$ most frequent objects in the input are
among the top-$\kopt$ most frequent objects in the sample with probability at
least~$1-\delta$.

A \emph{probably exactly correct (PEC)} algorithm to compute the top-$k$ most
frequent objects of a multiset whose frequency distribution is sufficiently
sloped can therefore be formulated as follows.  Take a small sample with
sampling probability~$\rho_0$, the value of which we will consider later.  From
this small sample, we deduce the required value of~$\kopt$ to fulfill the above
requirements.  Now, we apply Algorithm \EC using this value of~$\kopt$.
Let~$x_i$ be the object of rank~$i$ in the input, and~$\hat s_j$ that of
rank~$j$ in the small sample.

\begin{lemma}\label{lem:topkpec}
  It suffices to choose~$\kopt$ in such a way that
  $\hat s_{\kopt} \leq \rho_0 x_k - \sqrt{2\rho_0 x_k \ln{\frac{k}{\delta}}} =
  \expect{\hat s_k} - \sqrt{2\expect{\hat s_k}\ln{\frac{k}{\delta}}}$
  to ensure correctness of Algorithm \PEC's result with
  probability at least $1-\delta$.
\end{lemma}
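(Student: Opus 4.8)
The plan is to reduce correctness of Algorithm \PEC to a single bad event and then kill that event with a Chernoff bound and a union bound. Algorithm \EC, when invoked with the chosen $\kopt$, counts the $\kopt$ most frequently \emph{sampled} objects exactly and returns the $k$ objects with the largest exact counts among them. Since that last choice uses exact counts, spurious objects among the $\kopt$ counted ones are harmless, and the \emph{only} way the output can be wrong is if some genuinely top-$k$ object of $M$ is missing from the $\kopt$ most frequently sampled objects. Hence it suffices to show that, with the stated choice of $\kopt$, this happens with probability at most $\delta$.

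First I would fix notation: let $o_1,\dots,o_k$ be the $k$ most frequent objects of $M$, with input counts $x_1\ge\dots\ge x_k$; let $S_i$ be the Bernoulli sample count of $o_i$, so $\expect{S_i}=\rho_0 x_i=:\mu_i$ and $\mu_i\ge\mu_k=\rho_0 x_k=\expect{S_k}$ for $i\le k$; and set $\tau\Def\rho_0 x_k-\sqrt{2\rho_0 x_k\ln\frac{k}{\delta}}$, which is precisely the right-hand side of the claimed inequality (read with $\expect{\hat s_k}$ standing for $\expect{S_k}=\rho_0 x_k$). The hypothesis is exactly that $\kopt$ is picked so that $\hat s_{\kopt}\le\tau$ (and the construction always uses $\kopt\ge k$). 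Thus, if $S_i>\tau$ for every $i\le k$, then every $o_i$ exceeds the $\kopt$-th largest sample count and so lies among the $\kopt$ most frequently sampled objects, which is enough. Therefore the failure probability is at most $\prob{\exists i\le k:S_i\le\tau}$, and --- crucially --- $\tau$ depends only on $\rho_0$ and the fixed input, not on the sample-dependent $\kopt$, so this is a genuinely fixed event and a union bound over the $k$ objects is legitimate.

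The heart of the argument is then a routine Chernoff estimate. For fixed $i\le k$ I would apply the lower-tail bound in Equation~\ref{eq:ch_leq} with $X=S_i$, $\expect{X}=\mu_i$, and $\varphi=1-\tau/\mu_i$; assuming $\tau>0$ (otherwise $\prob{S_i\le\tau}=0$) this $\varphi$ lies in $(0,1)$ and yields $\prob{S_i\le\tau}\le\exp\!\big(-(\mu_i-\tau)^2/(2\mu_i)\big)$. The map $\mu\mapsto(\mu-\tau)^2/\mu$ is increasing on $(\tau,\infty)$, so over $\mu_i\ge\mu_k$ it is minimized at $\mu_i=\mu_k$, and by the definition of $\tau$ one has $\mu_k-\tau=\sqrt{2\mu_k\ln\frac{k}{\delta}}$, hence $(\mu_k-\tau)^2/(2\mu_k)=\ln\frac{k}{\delta}$. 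This gives $\prob{S_i\le\tau}\le\delta/k$ for each $i\le k$, and summing over the $k$ objects completes the proof.

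I expect the only real obstacle to be the bookkeeping flagged above: because $\kopt$ is a function of the realized sample, the argument must be routed through the non-random threshold $\tau$, with the hypothesis $\hat s_{\kopt}\le\tau$ providing the deterministic link to $\hat s_{\kopt}$; conflating $\tau$ with $\hat s_{\kopt}$ and then taking a union bound would be circular. A secondary, cosmetic point is the identity in the statement, where $\expect{\hat s_k}$ should be understood as $\rho_0 x_k$; and if one wants the exact statement run by the algorithm, where the unknown $x_k$ is replaced by the sample-derived lower bound $\underline{\hat{x}_k}$ of Figure~\ref{fig:gap}, one simply folds the additional small probability that this lower bound fails into $\delta$.
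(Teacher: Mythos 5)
Your proof is correct and follows essentially the same route as the paper's: a union bound over the $k$ most frequent objects combined with the lower-tail Chernoff bound (Equation~\ref{eq:ch_leq}) applied at the threshold $\rho_0 x_k - \sqrt{2\rho_0 x_k\ln\frac{k}{\delta}}$, with the worst case attained at the object of rank $k$. Your explicit detour through the deterministic threshold $\tau$ (rather than the sample-dependent $\hat s_{\kopt}$ that the paper plugs directly into the Chernoff exponent) is a welcome tightening of the bookkeeping, but it does not change the substance of the argument.
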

\begin{proof}
  We use the Chernoff bound from Equation~\ref{eq:ch_leq} to bound the
  probability of a top-$k$ object not being among the top-$\kopt$ objects of the
  sample.  Define~$\hat s(x_i)$ as the number of samples of the object with
  input rank~$i$ in the first sample, and~$x(\hat s_j)$ to be the exact number
  of occurrences in the input of the object with rank~$j$ in the first sample.
  Using~$X=\hat s(x_k)$, $\expect[X]=\rho_0 x_k$,
  and~$\varphi=1-\frac{\hat s_{\kopt}}{\rho_0 x_k}$, we obtain
  \begin{align*}
    \sum_{j=1}^{k}{\prob{\hat s(x_j) \leq \hat s_{\kopt}}}
    &\leq k\cdot\prob{\hat s(x_k)\leq \hat s_{\kopt}}\\
    &\leq k\cdot e^{-\left(1-\frac{\hat s_{\kopt}}{\rho_0 x_k}\right)^2\frac{\rho_0 x_k}{2}}\punkt
  \end{align*}
  We bound this value by the algorithm's error probability~$\delta$ and solve
  for~$\hat s_{\kopt}$, which yields the claimed value.
\end{proof}

This only works for sufficiently sloped distributions, as
otherwise~$\kopt \gg k$ would be necessary.  Furthermore, it is clear that the
choice of~$\rho_0$ presents a trade--off between the time and communication
spent on the first sample and the exactness of~$\kopt$, which we have to
estimate more conservatively if the first sample is small.  This is due to less
precise estimations of~$\expect{\hat s_k}=\rho_0 x_k$ if~$\rho_0$ is small.  To
keep things simple, we can choose a relative error bound~$\eps_0$ and use the
sample size from our PAC algorithm of Theorem~\ref{thm:topkpac}.  The value
of~$\eps_0$---and thus~$\rho_0$---that minimizes the communication volume
depends on the distribution of the input data.

\begin{theorem}\label{thm:topkpectime}
  If the value of~$\kopt$ computed from Lemma~\ref{lem:topkpec} satisfies
  $\kopt=\Oh{k}$, then Algorithm \PEC requires time asymptotically equal to the
  sum of the running times of algorithms \PAC and \EC from
  Theorems~\ref{thm:topkpac} and~\ref{thm:topkcount}.
\end{theorem}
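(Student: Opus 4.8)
The plan is to decompose Algorithm~\PEC into its two constituent phases and bound each against a theorem already proved. First I would observe that \PEC's first phase is literally a run of Algorithm~\PAC with relative error~$\eps_0$ (take a Bernoulli sample with probability~$\rho_0$ and hash it into the distributed counting table, as analysed in Lemma~\ref{lem:topkapproxtime} and Theorem~\ref{thm:topkpac}), followed by a constant number of extra collective operations to extract~$\kopt$ from the sample: one call of the unsorted-selection routine of Figure~\ref{fig:select} to obtain the rank-$k$ sample count~$\hat s_k$, an evaluation of the (conservatively adjusted, since $\hat s_k$ only estimates $\expect{\hat s_k}=\rho_0 x_k$) threshold of Lemma~\ref{lem:topkpec}, and one sum-all-reduction in which each PE reports how many of its local sample objects exceed that threshold, whose total is essentially~$\kopt$. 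Using $n\rho_0\le n$ and $p\le n$, I would argue these extras are dominated by the $\Oh{n/p}$, $\Tword\frac{\log p}{p\eps_0^2}\log\frac{k}{\delta}$, and $\Tstart\log n$ terms of Theorem~\ref{thm:topkpac}, so the first phase costs asymptotically the \PAC running time evaluated at~$\eps_0$.

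Next I would handle the second phase, which is a run of Algorithm~\EC with an exact-count target. Invoking the hypothesis $\kopt=\Oh{k}$, together with the trivial $\kopt\ge k$ (whence $\kopt=\Th{k}$), I would check that the sample size $n\rho=\frac{2}{\eps^2\kopt}\ln\frac{n}{\delta}$ of Lemma~\ref{lem:topkec}, the all-gather and vector-reduction of the exactly-counted objects (cost $\Oh{\Tword\kopt+\Tstart\log p}$), the $\Oh{n/p}$ local re-scan, and the two invocations of Figure~\ref{fig:select} all match, term by term, the quantities bounded in the proof of Theorem~\ref{thm:topkcount}; hence the second phase costs asymptotically the \EC running time. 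Summing the two phase bounds, and recalling that the times in Theorems~\ref{thm:topkpac} and~\ref{thm:topkcount} are themselves expectations over the Bernoulli sampling, yields the claim; correctness of the guessed~$\kopt$ (probability at least $1-\delta$ by Lemma~\ref{lem:topkpec}) is a separate statement and does not enter the time bound.

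The delicate point — the one I would be most careful about — is making the second phase genuinely match Theorem~\ref{thm:topkcount} rather than a weaker bound. Since Lemma~\ref{lem:topkec}'s sample size \emph{grows} as the exact-count target \emph{shrinks}, feeding \EC exactly~$\kopt$ could inflate the sample beyond \EC's claimed communication volume whenever $\kopt$ lies below the value $\hat k=\Th{\frac{1}{\eps}\sqrt{(\log p)/p\cdot\log(n/\delta)}}$ that balances \EC's two communication terms inside the proof of Theorem~\ref{thm:topkcount}. I would resolve this by having \PEC count exactly the top~$\max(\kopt,\hat k)$ sample objects: enlarging the target only shrinks the needed sample and only adds objects that are already being counted, so correctness is untouched, and since $\kopt=\Oh{k}$ — which, in the parameter regime where Theorem~\ref{thm:topkcount}'s stated bound is the applicable one, also gives $\hat k=\Th{k}$ — this maximum is still $\Th{\hat k}$, so no asymptotic term is lost. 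The remaining work is routine bookkeeping ($n\rho_0\le n$, $p\le n$) to confirm the ``derive~$\kopt$'' steps of the first phase contribute no new dominant term.
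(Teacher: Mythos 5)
Your decomposition is the same as the paper's: the first phase is a \PAC run at an auxiliary tolerance~$\eps_0$ (plus lower-order collectives to extract~$\kopt$ from that sample, including the same conservative correction for the fact that $\hat s_k$ only estimates $\rho_0 x_k$ that the paper applies via $\expect{\hat s_k}\geq \hat s_k-\sqrt{2\hat s_k\ln(1/\delta)}$), and the second phase is an \EC run whose cost is controlled through the hypothesis $\kopt=\Oh{k}$ combined with $\kopt\geq k$. The one point where you genuinely diverge is your ``delicate point'': the paper reconciles $\kopt$ with \EC's communication-optimal count target not by enlarging the target to $\max(\kopt,\hat k)$, but by going the other way around---it solves the balance relation $\kopt=\frac{1}{\eps}\sqrt{\frac{2\log p}{p}\log\frac{n}{\delta}}$ for~$\eps$, letting the measured $\kopt$ \emph{define} the second stage's effective accuracy parameter. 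That is the more natural move here, since \PEC has no externally prescribed~$\eps$ (its goal is exact correctness with probability $1-\delta$), and it makes $\kopt$ coincide with \EC's balance point by construction. Both resolutions give the same asymptotics, since with $\kopt=\Th{k}$ the dominant $\Tword$-term is $\Th{\kopt}$ either way; your variant has the minor merit of still working if one insists on a fixed external~$\eps$. No gap.
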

\begin{proof}\emph{(Outline)}
  In the first sampling step, we are free to choose an arbitrary relative error
  tolerance~$\eps_0$.  The running time of this stage is
  $\Ohsmall{\Tword\frac{\log{p}}{p \eps_0^2}\log{\frac{n}{\delta}} + \Tstart\log
    n}$
  by Theorem~\ref{thm:topkpac}.  We then estimate~$\kopt$ by substituting the
  high-probability bound
  $\expect[\hat s_k] \geq \hat s_k-\sqrt{2\hat s_k \ln(1/\delta)}$ (whp) for its
  expected value in Lemma~\ref{lem:topkpec} (note that as~$\hat s_k$ increases
  with growing sample size and thus grows with falling~$\eps_0$, the precision
  of this bound increases).  In the second stage, we can calculate the required
  value of~$\eps$ from~$\kopt$ by solving the expression for~$\kopt$ in the
  proof of Theorem~\ref{thm:topkcount} for~$\eps$, and obtain
  $\eps = \frac{1}{\kopt} \sqrt{\frac{2\log{p}}{p}
    \log{\frac{n}{\delta}}}$.
  Since~$\kopt=\Ohsmall{k}$, the second stage's running time is as in
  Theorem~\ref{thm:topkcount}.  In sum, the algorithm requires the claimed
  running time.
\end{proof}

\paragraph*{Zipf's Law}\label{sss:zipf} In its simplest form, Zipf's Law
states that the frequency of an object from a multiset~$M$ with~$|M|=n$ is
inversely proportional to its rank among the objects ordered by frequency. Here,
we consider the general case with exponent parameter~$s$, i.e.~$x_i=
n\frac{i^{-s}}{H_{n,s}}$, where~$H_{n,s}\!=\sum_{i=1}^{n}{i^{-s}}$ is the~$n$-th
\textit{generalized harmonic number}.

\begin{theorem}\label{thm:zipf}
  For inputs distributed according to Zipf's law with exponent~$s$, a sample
  size of $\rho n=\nobreak 4 k^s H_{n,s} \ln{\frac{k}{\delta}}$ is sufficient to
  compute a probably exactly correct result.  Algorithm \PEC then requires
  expected time
  $\Oh{\frac{n}{p} + \Tword\frac{\log{p}}{p}k^s H_{n,s} \log{\frac{k}{\delta}} +
    \Tstart\log{n}}$
  to compute the~$k$ most frequent objects with probability at least $1-\delta$.
\end{theorem}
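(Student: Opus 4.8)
The plan is to specialize the \PEC framework---Lemma~\ref{lem:topkpec} together with Theorem~\ref{thm:topkpectime}---to the explicit Zipf frequencies $x_i=n\,i^{-s}/H_{n,s}$ and then to read off the running time from Lemma~\ref{lem:topkapproxtime} and Theorem~\ref{thm:topkcount}. First I would note that $x_k=n\,k^{-s}/H_{n,s}$, so a Bernoulli sample of expected size $\rho n=4k^sH_{n,s}\ln\frac k\delta$ produces $\expect{\hat s_k}=\rho x_k=4\ln\frac k\delta$, and more generally $\rho x_j=4(k/j)^s\ln\frac k\delta$. Plugging $\expect{\hat s_k}=4\ln\frac k\delta$ into Lemma~\ref{lem:topkpec}, the output of \EC (hence of \PEC) is exact with probability at least $1-\delta$ as soon as $\kopt$ is chosen with $\hat s_{\kopt}\le\rho x_k-\sqrt{2\rho x_k\ln\frac k\delta}=(4-2\sqrt2)\ln\frac k\delta=:T$; since $T>0$ this is always achievable, and taking the smallest such index (but at least $k$) makes $\kopt=\max(k,\,m+1)$ with $m\Def\bigl|\{j:\hat s(x_j)>T\}\bigr|$. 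Thus correctness comes for free, and the whole theorem reduces to proving that this sample size keeps the algorithm efficient, i.e.\ $\expect{\kopt}=\Oh{k}$, equivalently $\expect{m}=\Oh{k}$; once we have that, Theorem~\ref{thm:topkpectime} applies and \PEC costs the sum of the \PAC and \EC running times.

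For $\expect{m}=\sum_j\prob{\hat s(x_j)>T}$ I would split the objects at the crossover index $j_0$ where the expected sample count equals $T$. Since $\rho x_j=4(k/j)^s\ln\frac k\delta$ decays polynomially, $j_0$ is a constant (depending only on $s$) times $k$; the $j_0=\Oh{k}$ objects with $\rho x_j\ge T$ contribute at most $1$ each, hence $\Oh{k}$ in total. For every $j>j_0$ we have $T>\rho x_j$, so the Chernoff bound~\eqref{eq:ch_geq} applies with deviation multiplier $1+\varphi_j=T/(\rho x_j)$, which grows like $(j/k)^s$; it gives $\prob{\hat s(x_j)>T}\le e^{-\varphi_j^2\rho x_j/3}$, and substituting $\rho x_j$ turns the right-hand side into a term of a rapidly decaying series whose sum is again $\Oh{k}$. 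Hence $\expect{m}=\Oh{k}$, so $\expect{\kopt}=\Oh{k}$. The factor $4$ in the sample size is exactly the slack that makes both parts of this estimate go through, and I expect this paragraph to be the main obstacle: making the Chernoff-tail series over up to $n$ objects converge with only a $\log\frac k\delta$-sized budget in the exponents genuinely needs Zipf's polynomial tail, whereas a naive union bound would demand a $\log\frac n\delta$-sized budget.

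It then remains to assemble the running time. No preliminary estimation sample is needed here: $\kopt$ can be read directly off the single main sample (the index where the sorted sample counts drop below $T$), so Lemma~\ref{lem:topkpec} applies with $\rho_0=\rho$. The exact-counting phase performs $\Oh{n/p}$ local work, an all-gather plus a vector reduction over the $\kopt$ counted identities at cost $\Oh{\Tword\kopt+\Tstart\log p}=\Oh{\Tword k+\Tstart\log p}$ in expectation, and---dominating the communication---distributed hashing of the sample of expected size $\rho n=4k^sH_{n,s}\ln\frac k\delta$, which by Lemma~\ref{lem:topkapproxtime} costs $\Oh{\Tword\frac{\rho n}{p}\log p+\Tstart\log n}=\Oh{\Tword\frac{\log p}{p}k^sH_{n,s}\log\frac k\delta+\Tstart\log n}$ (using $\rho\le1$ and $p\le n$ to simplify the latency term). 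Adding these and absorbing the $\Tword k$ term into the hashing term yields the claimed bound $\Oh{\frac np+\Tword\frac{\log p}{p}k^sH_{n,s}\log\frac k\delta+\Tstart\log n}$.
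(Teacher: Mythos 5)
Your overall strategy matches the paper's: skip the preliminary estimation sample, specialize Lemma~\ref{lem:topkpec} to the explicit Zipf frequencies, argue $\kopt=\Oh{k}$, and charge the running time to the \EC machinery exactly as you do in your last paragraph. Where you diverge is in how $\kopt=\Oh{k}$ is established. The paper simply substitutes expected sample counts into the threshold condition, i.e.\ it solves $\rho x_{\kopt}=\rho x_k-\sqrt{2\rho x_k\ln(k/\delta)}$ for $\kopt$, which with $\rho x_k=4\ln\frac k\delta$ gives the deterministic crossover index $k(2+\sqrt2)^{1/s}\approx\sqrt[s]{3.41}\,k$ and is then \emph{called} $\expect{\kopt}$. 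You instead treat $\kopt$ as the random number of objects whose sample count exceeds $T=(4-2\sqrt2)\ln\frac k\delta$ and bound its expectation by a tail sum. That is a more honest argument --- the substitution of expectations is precisely the step that needs justification --- and your closing remark that the union bound must be paid for out of a $\log\frac k\delta$-sized exponent budget correctly identifies where the difficulty lies.

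However, the execution of that tail sum has a concrete gap. The Chernoff bound~(\ref{eq:ch_geq}) is stated, and valid in that form, only for $0<\varphi<1$, whereas your deviation multiplier $\varphi_j=T/(\rho x_j)-1=(j/j_0)^s-1$ is unbounded in $j$. In the regime $\varphi>1$ the multiplicative bound degrades to roughly $e^{-\varphi\mu/3}$, and since $\varphi_j\,\rho x_j=T-\rho x_j\le T$ is essentially constant in $j$, that only yields $\prob{\hat s(x_j)>T}\le(\delta/k)^{\Theta(1)}$ \emph{per object}; summed over up to $n$ objects this is $n(\delta/k)^{\Theta(1)}$, which is not $\Oh{k}$ in general. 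To make the series converge you need the full Poisson-type tail bound $\bigl(e^{\varphi}/(1+\varphi)^{1+\varphi}\bigr)^{\mu}$, which for fixed threshold $T$ and $\mu=\rho x_j\to0$ behaves like $(e\mu/T)^{T}=\Oh{(k/j)^{sT}}$ --- a summable power of $j$ once $j\ge Ck$ for a suitable constant $C$ --- while the $\Oh{k}$ indices between $j_0$ and $Ck$ contribute at most one each. With that replacement your argument goes through and is strictly stronger than the paper's; as written, the far tail of the Zipf distribution is not controlled by the bounds you invoke.
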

\begin{proof}
  Knowing the value distribution, we do not need to take the first sample.
  Instead, we can calculate the expected value of~$\kopt$ directly from the
  proof of Lemma~\ref{lem:topkpec} and obtain
  \[
    \expect{\kopt} = k\left( 1 - \sqrt{\frac{2\ln{\frac{k}{\delta}}}{\rho x_k}}
    \right)^{-\frac{1}{s}}\punkt
  \]
  This immediately yields~$\rho > \frac{2}{x_k}\ln{\frac{k}{\delta}}$, and we
  choose~$\rho = \frac{4}{x_k}\ln{\frac{k}{\delta}} = 4 \frac{k^sH_{n,s}}{n}
  \ln{\frac{k}{\delta}}$,
  i.e. twice the required minimum value.  This gives us the claimed sample size.
  Now, we obtain
  $\expect{\kopt} = \smash{k \left(2 + \sqrt{2}\right)} ^{\frac{1}{s}} \approx
  \sqrt[s]{3.41}k$.
  In particular,~$\kopt$ is only a constant factor away from $k$.  Plugging the
  above sampling probability into the running time formula for our algorithm
  using exact counting, we obtain the exact top-$k$ most frequent objects with
  probability at least~$1-\delta$ in the claimed running time.
\end{proof}

Note that the number of frequent objects decreases sharply with $s>1$, as
the~$k$-th most frequent one has a relative frequency of only
$\Ohsmall{k^{-s}}$.  The~$k^s H_{n,s}$ term in the communication volume is thus
small in practice, and, in fact, unavoidable in a sampling-based algorithm.
(One can easily verify this claim by observing that
this factor is the reciprocal of the $k$-th most frequent object's relative
frequency. It is clear that this object needs at least one occurrence in the
sample for the algorithm to be able to find it, and that the sample size must
thus scale with~$k^sH_{n,s}$.)

\subsection{Refinements}\label{ss:topkref}

When implementing such an algorithm, there are a number of
considerations to be taken into account to achieve optimal
performance. Perhaps most importantly, one should apply local
aggregation when inserting the sample into the distributed hash table
to reduce the amount of information that needs to be
communicated in practice. We now discuss other potential improvements.

\paragraph*{Choice of $\kopt$.} In practice, the choice of $\kopt$ in
Section~\ref{ss:topkcount} depends on the communication channel's
characteristics $\Tword$, and, to a lesser extent, $\Tstart$, in
addition to the problem parameters. Thus, an optimized implementation
should take them into account when determining the number of objects
to be counted exactly.

\paragraph*{Adaptive Two-Pass Sampling (Outline).} The objectives of the basic PAC
algorithm and its variant using exact counting could be unified as
follows: we sample in two passes. In the first pass, we use a small
sample size~$n\rho_0=f_0(n, k, \eps, \Delta)$ to determine the
nature of the input distribution. From the insights gained from this
first sample, we compute a larger sample size $n\rho_1 = f_1(n, k,
\hat{c_k}, \eps, \Delta)$. We then determine and return the~$k$
most frequent objects of this second sample.

Additionally, we can further refine this algorithm if we can already
tell from the first sample that with high probability, there is no
slope. If the absolute counts of the objects in the sample are large
enough to return the~$k$ most frequent objects in the sample with
confidence, then taking a second sample would be of little benefit and
we can return the~$k$ most frequent objects from the first sample.

\paragraph*{Using Distributed Bloom Filters.} Communication
efficiency of the algorithm using exact counting could be improved
further by counting sample elements with a distributed single-shot bloom filter
(dSBF)~\cite{SSM13} instead of a distributed hash table. We transmit
their hash values and locally aggregated counts. As multiple keys
might be assigned the same hash value, we need to determine the
element of rank $\kopt + \kappa$ instead of~$\kopt$, for some safety margin~$\kappa>0$. We request
the keys of all elements with higher rank, and replace the (hash,~value)
pairs with (key, value) pairs, splitting them where hash collisions
occurred. We now determine the element of rank $\kopt$ on the~%
$\kopt +\nobreak\kappa +\nobreak\#\text{collisions}$ elements. If an element whose
rank is at most $\kopt$ was part of the original $\kopt + \kappa$
elements, we are finished. Otherwise, we have to increase $\kappa$ to
determine the missing elements. Observe that if the frequent objects
are dominated by hash collisions, this implies that the input
distribution is flat and there exist a large number of nearly equally
frequent elements. Thus, we may not need to count additional elements
in this case.

\section[Top-k Sum Aggregation]{Top-$k$ Sum Aggregation}\label{s:sumagg}

\newcommand{\cm}{c_\text{max}}
\newcommand{\va}{v_\text{avg}}

Generalizing from Section~\ref{s:topk}, we now consider an input multiset of
keys with associated non-negative counts, and ask for the~$k$ keys whose counts
have the largest sums.  Again, the input~$M$ is distributed over the~$p$ PEs so
that no PE has more than~$\Oh{n/p}$ objects.  Define
$m \Def\nobreak \sum_{(k,v) \in M}{v}$ as the sum of all counts, and let no PE's
local sum exceed~$\Oh{m/p}$\footnote{This assumption is not strictly necessary,
  and the algorithm would still be communication efficient without it. However,
  making this assumption allows us to limit the number of samples transmitted by
  each PE as $s/p$ for global sample size~$s$.  To violate this criterion, a
  small number of PEs would have to hold~$\Om{s/p}$ elements that are likely to
  yield at least one sample, making up for a large part of the global sample
  size without contributing to the result.  In such a rare setting, we would
  incur up to a factor of~$p$ in communication volume and obtain running time
  $\Oh{\frac{n}{p} + \Tword \frac{\log p}{\eps}\sqrt{p \log\frac{n}{\delta}} +
    \Tstart \log{n}}.$}.
We additionally assume that the local input
and a key-aggregation thereof---e.g. a hash table mapping keys to their local
sums---fit into RAM at every PE.

It is easy to see that except for the sampling process, the algorithms of
Section~\ref{s:topk} carry over directly, but a different approach is required
in the analysis.

\subsection{Sampling}\label{s:sumagg:sampling}

Let~$s$ be the desired sample size, and define $\va \Def \sfrac{m}{s}$ as the
expected count required to yield a sample.  When sampling an object~$(k,v)$,
its expected sample count is thus~$\frac{v}{\va}$.  To retain constant time per
object, we add $\lfloor \frac{v}{\va}\rfloor$ samples directly, and one
additional sample with probability $\frac{v}{\va}-\lfloor\frac{v}{\va}\rfloor$
using a Bernoulli trial.  We can again use a geometric distribution to reduce
the number of calls to the random number generator.

To improve accuracy and speed up exact counting, we aggregate the local input in
a hash table, and sample the aggregate counts.  This allows us to analyze the
algorithms' error independent of the input objects' distribution.  A direct
consequence is that for each key and PE, the number of samples deviates from its
expected value by at most~$1$, and the overall deviation per
key~$\abs{s_i - \expect{s_i}}$ is at most~$p$.

\subsection{Probably Approximately Correct Algorithms}\label{s:sumagg:pac}

\begin{theorem}\label{thm:sumpac}
  We can compute an~$(\eps, \delta)$-approximation of the top-$k$ highest
  summing items in expected time
  $\Oh{\frac{n}{p} + \Tword \frac{\log{p}}{\eps}
    \sqrt{\frac{1}{p}\log\frac{n}{\delta}} + \Tstart \log{n}}$.
\end{theorem}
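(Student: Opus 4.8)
The plan is to run exactly the PAC skeleton of Section~\ref{ss:topkapprox} (distributed hashing of a sample, selecting the top keys with Algorithm~\ref{fig:select}, rescaling the sample counts), replacing only the Bernoulli sampling by the aggregate-then-sample scheme of Section~\ref{s:sumagg:sampling}. Concretely, I would first compute $m=\sum_{(k,v)\in M}v$ by a sum-reduction and fix $\va=m/s$ for a target sample size $s$ to be determined; each PE aggregates its local input into a hash table, and for every local key with local sum $v$ records a sample count of $\lfloor v/\va\rfloor$ plus one more with probability $v/\va-\lfloor v/\va\rfloor$ (sweeping over the aggregated keys, so a geometric skip value handles runs of keys with $v<\va$ in $\Oh{1}$ amortized time; the integer parts need never be materialized as copies). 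The resulting $(\text{key},\text{count})$ pairs are merged into a distributed hash table by an indirect-delivery reduction, the $k$ keys of largest summed sample count are isolated by Algorithm~\ref{fig:select} and broadcast, and each PE reports its locally sufficiently frequent sample keys with counts rescaled by $\va$. By the per-PE mass assumption, each PE emits $\Oh{s/p}$ distinct sampled keys in expectation, so the running-time accounting of Lemma~\ref{lem:topkapproxtime} and Theorem~\ref{thm:topkpac} carries over literally with $n\rho$ replaced by $s$: $\Oh{n/p}$ internal work, $\Oh{\Tword\tfrac sp\log p+\Tstart\log p}$ for the merging reduction, $\Oh{\tfrac sp+\Tword\tfrac sp+\Tstart\log s}$ for the two calls to Algorithm~\ref{fig:select}, and $\Oh{\tfrac np+\Tstart\log p}$ for the final filtering. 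It then only remains to pin down $s$.

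For the error, I would follow Lemma~\ref{lem:topkapproxerr}: the event $\Error m\ge\Delta$ with $\Delta=\eps m$ forces some key's rescaled estimate $\va s_i$ to deviate from its true sum $v_i$ by more than $\Delta/2$, i.e.\ $\abs{s_i-\expect{s_i}}>\Delta/(2\va)=\eps s/2$, so a union bound over the $n$ keys reduces the task to bounding $\prob{\abs{s_i-\expect{s_i}}>\eps s/2}$ for one fixed key. Here the pre-sampling aggregation is exactly what I would exploit: writing $s_i=\sum_{j=1}^{p}s_{ij}$ for the per-PE sample counts, each $s_{ij}$ is deterministically either $\lfloor v_{ij}/\va\rfloor$ or $\lceil v_{ij}/\va\rceil$, so the summands $s_{ij}-\expect{s_{ij}}$ are independent, mean-zero, and each confined to an interval of length $1$. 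Hoeffding's inequality for sums of bounded independent random variables then gives
\[
  \prob{\abs{s_i-\expect{s_i}}>t}\;\le\;2e^{-2t^2/p}
\]
for every key $i$. The crucial feature is that this bound is independent of $v_i$ — unlike the multiplicative Chernoff bound applied directly to $s_i$ in Lemma~\ref{lem:topkapproxerr} — so no case distinction between the top-$k$ keys and the rest is needed.

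Requiring $2n\,e^{-2(\eps s/2)^2/p}\le\delta$ then yields $s\ge\frac{\sqrt{2p}}{\eps}\sqrt{\ln\frac{2n}{\delta}}=\Oh{\frac1{\eps}\sqrt{p\log\frac n\delta}}$, hence per-PE sample size and communication volume $\frac sp\log p=\Oh{\frac{\log p}{\eps}\sqrt{\frac1p\log\frac n\delta}}$; plugging this into the running time above, and using $\log s=\Oh{\log n}$, gives the stated bound. (When $\eps$ is so small that $s>n$, one just aggregates the whole input, at cost $\Oh{n/p}$ per PE, which is subsumed.)

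The main obstacle — and the only genuine departure from Section~\ref{s:topk} — is precisely the observation that aggregating locally before sampling turns each per-key sample count into a sum of $p$ independent terms with deterministic integer parts, so concentration comes from a bounded-differences inequality over $p$ summands rather than a Chernoff bound on $s_i$ itself; this is what trades the $1/\eps^2$ sample size of the generic PAC algorithm for $1/\eps$. The rest is bookkeeping: using the $\Oh{m/p}$-mass assumption precisely to cap each PE's outgoing sample at $\Oh{s/p}$ (cf.\ the footnote, which spells out the $\Om{s/p}$ blow-up otherwise), observing that $\va$ need not be integral, and noting that the crude deterministic bounds $\abs{s_{ij}-\expect{s_{ij}}}<1$ and $\abs{s_i-\expect{s_i}}\le p$ are consistent with and subsumed by the Hoeffding estimate.
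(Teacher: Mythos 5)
Your proposal is correct and follows essentially the same route as the paper: local aggregation so that each key's sample count is a sum of at most $p$ independent bounded (Bernoulli-fractional-part) terms, Hoeffding's inequality giving $\prob{\abs{s_i-\expect{s_i}}\geq t}\leq 2e^{-2t^2/p}$ independently of the key's mass, a union bound over $n$ keys yielding $s=\Oh{\frac{1}{\eps}\sqrt{p\log\frac{n}{\delta}}}$, and the PAC running-time accounting with the $\Oh{m/p}$-mass assumption capping each PE's contribution at $\Oh{s/p}$ samples. The paper's proof is a terser version of exactly this argument.
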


Sampling is done using local aggregation as described in
Section~\ref{s:sumagg:sampling}.  From then on, we proceed exactly as in
Algorithm \PAC from Section~\ref{ss:topkapprox}.

\begin{proof}
  The part of an element's sample count that is actually determined by sampling
  is the sum of up to~$p$ Bernoulli trials $X_1,\ldots,X_p$ with differing
  success probabilities.  Therefore, its expected value~$\mu$ is the sum of the
  probabilities, and we can use Hoeffding's inequality to bound the probability
  of significant deviations from this value.  Let~$X \Def \sum_{i=1}^{p}{X_i}$.
  Then,
  \begin{equation}
    \prob{\abs{X-\mu} \geq t} \leq 2e^{-\frac{2t^2}{p}}\punkt \label{eq:sa_hoeff}
  \end{equation}

  We now use this to bound the likelihood that an object has been very
  mis-sampled.  Consider an element~$j$ with exact sum~$x(j)$ and sample
  sum~$s_j$.  For some threshold~$\Delta$, consider the element mis-sampled
  if~$\abs{x(j)-s_j \va} \geq \frac{\Delta}{2}$, i.e. its estimated sum deviates
  from the true value by more than~$\frac{\Delta}{2}$ in either direction.
  Thus, we substitute~$t=\frac{\Delta}{2\va}$ into Equation~(\ref{eq:sa_hoeff})
  and bound the result by~$\frac{\delta}{n}$ to account for all elements.
  Solving for~$s=\frac{m}{\va}$, we
  obtain~$s \geq \frac{1}{\eps}\sqrt{2p\ln\frac{2n}{\delta}}$.

  In total, we require time \( \Oh{\frac{n}{p}} \)
  for sampling, \( \Ohsmall{\Tword \frac{s}{p} \log{p} + \Tstart \log{p}} \)
  for insertion (as no PE's local sum exceeds \(\Ohsmall{m/p}\),
  none can yield more than \(\Ohsmall{s/p}\)
  samples), and \( \Ohsmall{\Tword \frac{s}{p} + \Tstart \log{n}} \)
  for selection.  We then obtain the claimed bound as sum of the components.
\end{proof}

As in Section~\ref{s:topk}, we can use exact summation to obtain a more precise
answer.  We do not go into details here, as the procedure is nearly identical.
The main difference is that a lookup in the local aggregation result now
suffices to obtain exact local sums without requiring consultation of the input.

\section{Data Redistribution}\label{s:redistribution}

Let $n_i$ denote the number of data objects present at~PE $i$. Let~$n=\sum_i
n_i$.  We want to redistribute the data such that afterwards each~PE has at most
$\navg=\ceil{n/p}$ objects and such that~PEs with more than~$\navg$ objects only send
data (at most~$n_i-\navg$ objects) and~PEs with at most~$\navg$ objects only receive
data (at most~$\navg-n_i$ objects).  We split the~PEs into separately numbered
groups of senders~$s_i$ and receivers~$d_i$.  We also compute the deficit~%
$d_i\Is\navg-n_i$ on receivers and the surplus~$s_i\Is n_i-\navg$ on
senders. Then we compute the prefix sums~$d$ and~$s$ of these sequences (i.e.,~%
$d_i\Is\sum_{j\leq i}d_j$ and~$s_i\Is\sum_{j\leq i}s_j$).  Effectively,~$d$
enumerates the empty slots able to receive objects and~$s$ enumerates the
elements to be moved. Now we match receiving slots and elements to be moved by
merging the sequences~$d$ and~$s$.  This is possible in time~$\Oh{\alpha\log p}$
using Batcher's parallel merging algorithm~\cite{Batcher68}. A subsequence of
the form~$\seq{d_i,s_j,\ldots,s_{j+k},d_{i+a},\ldots,d_{i+b}}$ indicates that
sending~PEs~$j,\ldots,j+k$ move their surplus to receiving~PE~$i$ (where sending
PE~$j+k$ only moves its items numbered~$s_{j+k-1}+1..d_{i+a}-1$). This is a
gather operation.  Sending~PE~$j+k$ moves its remaining elements to receiving
PEs~$i+a..i+b$. This is a scatter operation. These segments of~PE numbers can be
determined using segmented prefix operations~\cite{Ble89}. Overall, this can be
implemented to run in time~$\Oh{\Tword \max_in_i+\Tstart\log p}$. Even though this operation
cannot remove worst case bottlenecks, it can significantly reduce network
traffic.

\section{Experiments}\label{s:exp}

\renewcommand{\figurename}{Figure}
\begin{figure}
\centering
\begin{tikzpicture}
  \begin{semilogxaxis}[
    width=12cm,
    log basis x=2,
    xmin=0.6,
    ymin=0,
    minor ytick={0,1,2,3,4,5,6},
    legend pos=north east,
    xlabel={Number of PEs},
    ylabel={Avg selection time in seconds}
  ]
  \addplot coordinates { (1,2.289902) (2,1.92767) (4,1.67565) (8,1.59483) (16,1.57288) (32,1.55677) (64,1.51482) (128,1.26662) (256,1.27629) (512,1.22115) (1024,1.2766) (2048,1.269321) };
  \addlegendentry{$k=2^{10}$};
  \addplot coordinates { (1,2.21813) (2,2.03585) (4,1.75388) (8,1.69641) (16,1.616085) (32,1.72745) (64,1.7021) (128,1.47777) (256,1.39552) (512,1.25228) (1024,1.26745) (2048,1.2692) };
  \addlegendentry{$k=2^{20}$};
  \addplot coordinates { (1,5.92552) (2,4.795934) (4,3.99602) (8,3.65208) (16,3.72848) (32,3.59751) (64,3.47115) (128,2.59199) (256,2.44343) (512,2.25697) (1024,2.39273) (2048,2.28088) };
  \addlegendentry{$k=2^{26}$};
 \end{semilogxaxis}
\end{tikzpicture}
\caption{Weak scaling plot for selecting the $k$-th largest object, $n/p=2^{28}$, Zipf distribution.}
\label{fig:sel_1}
\end{figure}

We now present an experimental evaluation of the unsorted selection from
Section~\ref{ss:unsortedSelection} and the top-$k$ most frequent objects
algorithms from Section~\ref{s:topk}.

\paragraph{Experimental Setup.} All algorithms were implemented in C++11 using
OpenMPI~1.8 and Boost.MPI~1.59 for inter-process communication.  Additionally,
Intel's Math Kernel Library in version~11.2 was used for random number
generation.  All code was compiled with the \texttt{clang++} compiler in
version~3.7 using optimization level \texttt{-Ofast} and instruction set
specification \texttt{-march=sandybridge}.  The experiments were conducted on
InstitutsCluster II at Karlsruhe Institute of Technology, a distributed system
consisting of 480 computation nodes, of which 128 were available to us.  Each
node is equipped with two Intel Xeon E5-2670 processors for a total of 16 cores
with a nominal frequency of 2.6\,GHz, and 64\,GiB of main memory.  In total,
2048 cores were available to us. An Infiniband 4X QDR interconnect provides
networking between the nodes.

\paragraph{Methodology.} We run \emph{weak scaling} benchmarks, which show how
wall-time develops for fixed per-PE problem size~$n/p$ as~$p$ increases.  We
consider~$p=1$ to $2048$~PEs, doubling~$p$ in each step.  Each PE is mapped to
one physical core in the cluster.

\paragraph{Zipf's Law}states that the frequency of an object from a
multiset~$M$ with~$|M|=n$ is inversely proportional to its rank among the
objects ordered by frequency.  Here, we consider the general case with exponent
parameter~$s$, i.e.~$x_i= n\frac{i^{-s}}{H_{n,s}}$,
where~$H_{n,s}\!=\sum_{i=1}^{n}{i^{-s}}$ is the~$n$-th \textit{generalized
  harmonic number}.

\subsection{Unsorted Selection}\label{s:exp:sel}

\paragraph{Input Generation.} We select values from the high tail of Zipf
distributions.  Per PE, we consider~$2^{24}$, $2^{26}$, and~$2^{28}$ integer
elements.  To test with non-uniformly distributed data, the PE's distribution
parameters are randomized.  The Zipf distributions comprise
between~$2^{20}-2^{16}$ and~$2^{20}$ elements, with each PE's value chosen
uniformly at random. Similarly, the exponent~$s$ is uniformly distributed
between~$1$ and~$1.2$. This ensures that several PEs contribute to the result,
so that the distribution is asymmetric, without the computation becoming a local
operation at one PE, which has all of the largest elements.

We used several values of~$k$, namely $1024$, $2^{20}$, and $2^{26}$.  We do not
consider smaller values than 1024, as for values this small, it would be more
efficient to locally select the~$k$ largest (or smallest) elements, and run a
simple distributed selection on those.

\paragraph{Results.} Figure~\ref{fig:sel_1} shows the results for selecting the
$k$-th largest values from the input, for the above values of~$k$.  We can see
that in most cases, the algorithm scales even better than the bounds lead us to
expect---running time decreases as more PEs are added.  This is especially
prominent when selecting an element of high rank ($k=2^{26}$ in
Figure~\ref{fig:sel_1}).  The majority of the time is spent in partitioning,
i.e. local work, dominating the time spent on communication.  This underlines
the effect of communication efficiency.

\subsection[Top-k Most Frequent Objects]{Top-$k$ Most Frequent  Objects}
\label{s:exp:mfo}

As we could not find any competitors to compare our methods against, we use two
naive centralized algorithm as baseline.  The first algorithm, \Naive, samples
the input with the same probability as algorithm PAC, but instead of using a
distributed hash table and distributed selection, each PE sends its aggregated
local sample to a coordinator.  The coordinator then uses quickselect to
determine the elements of rank~$1..k$ in the global sample, which it returns.  Algorithm
\NaiveTree proceeds similarly, but uses a tree reduction to send the elements to
the coordinator to reduce latency.  Similar to Algorithm \PAC's hash table
insertion operation, this reduction aggregates the counts in each step to keep
communication volume low.

\paragraph{Input Generation.} We consider~$2^{24}$,~$2^{26}$ and~$2^{28}$
elements per PE, which are generated according to different random
distributions.  First, we consider elements distributed according to Zipf's Law
with~$2^{20}$ possible values.  These values are very concentrated and model
word frequencies in natural languages, city population sizes, and many other
rankings well~\cite{auerbach1913,zipf1935}, the most frequent element
being~$j$-times more frequent than that of rank~$j$.  Next, we study a negative
binomial distribution with~$r=1000$ and success probability~$p=0.05$.  This
distribution has a rather wide plateau, resulting in the most frequent objects
and their surrounding elements all being of very similar frequency.  For
simplicity, each PE generates objects according to the same distribution, as the
distributed hash table into which the sample is inserted distributes elements
randomly.  Thus, tests with non-uniformly distributed data would not add
substantially to the evaluation.

\paragraph{Approximation Quality.} To evaluate different accuracy levels, we
consider the~$(\eps,\delta)$ pairs $(3\cdot10^{-4},10^{-4})$ and
$(10^{-6},10^{-8})$.  This allows us to evaluate how running time develops under
different accuracy requirements.

We then select the~$k=32$ most frequent elements from the input according to the
above requirements.  We do not vary the parameter~$k$ here, as it has very
little impact on overall performance.  Instead, we refer to
Section~\ref{s:exp:sel} for experiments on unsorted selection, which is the only
subroutine affected by increasing~$k$ and shows no increase up to~$k=2^{20}$.

\begin{figure}
\centering
\mbox{
\subfigure[$n/p=2^{26}$]{
  \label{fig:ws_1a}
  \hspace{-15pt}
  \begin{tikzpicture}
    \begin{axis}[
      xmode=log,
      ymode=log,
      log basis x=2,
      xmin=0.6,
      yticklabel=\pgfmathparse{exp(\tick)}\pgfmathprintnumber{\pgfmathresult},
      legend columns=2,
      legend pos=north west,
      xlabel={Number of PEs},
      ylabel={Avg duration in seconds}
    ]
    \addplot coordinates { (1,0.813626) (2,1.10855) (4,1.12079) (8,1.36782) (16,1.75582) (32,1.37717) (64,0.922133) (128,0.640385) (256,0.467409) (512,0.33065) (1024,0.224177) (2048,0.164381) };
    \addlegendentry{PAC};
    \addplot coordinates { (1,0.841938) (2,1.17159) (4,1.2442) (8,1.69068) (16,2.51978) (32,2.99248) (64,3.99828) (128,5.75208) (256,8.31361) (512,11.595) (1024,14.2989) (2048,17.8506) };
    \addlegendentry{Naive};
    \addplot coordinates { (1,0.445627) (2,0.58317) (4,0.539362) (8,0.561527) (16,0.615429) (32,0.617268) (64,0.663897) (128,0.669805) (256,0.644336) (512,0.712624) (1024,0.702662) (2048,0.894973) };
    \addlegendentry{EC};
    \addplot coordinates { (1,0.868986) (2,1.13795) (4,1.18336) (8,1.47349) (16,1.9443) (32,1.63324) (64,1.25456) (128,1.05254) (256,0.954903) (512,0.888346) (1024,0.832603) (2048,0.811554) };
    \addlegendentry{Naive Tree};
   \end{axis}
  \end{tikzpicture}
}

\subfigure[$n/p=2^{28}$]{
  \label{fig:ws_1b}
  \begin{tikzpicture}
  \begin{axis}[
    xmode=log,
    ymode=log,
    log basis x=2,
    xmin=0.6,
    yticklabel=\pgfmathparse{exp(\tick)}\pgfmathprintnumber{\pgfmathresult},
    legend columns=2,
    legend pos=north west,
    xlabel={Number of PEs},
    ylabel={Avg duration in seconds}
  ]
    \addplot coordinates { (1,2.99524) (2,3.80641) (4,3.71681) (8,3.55206) (16,2.66817) (32,1.66075) (64,1.31528) (128,0.851002) (256,0.616839) (512,0.426267) (1024,0.241647) (2048,0.187379) };
    \addlegendentry{PAC};
    \addplot coordinates { (1,3.00774) (2,3.88709) (4,3.85763) (8,4.03611) (16,3.49701) (32,3.30599) (64,4.47939) (128,5.96703) (256,8.24847) (512,10.973) (1024,14.2847) (2048,17.9784) };
    \addlegendentry{Naive};
    \addplot coordinates { (1,1.71952) (2,2.2667) (4,2.08766) (8,2.18973) (16,2.39353) (32,2.37044) (64,2.39986) (128,2.55326) (256,2.47119) (512,2.7129) (1024,2.73194) (2048,3.25047) };
    \addlegendentry{EC};
    \addplot coordinates { (1,3.01669) (2,3.82793) (4,3.76896) (8,3.65824) (16,2.86093) (32,1.91097) (64,1.71196) (128,1.26903) (256,1.0968) (512,1.0062) (1024,0.844812) (2048,0.827931) };
    \addlegendentry{Naive Tree};
  \end{axis}
  \end{tikzpicture}
}
}
\caption{Weak scaling plot for computing the 32 most frequent objects,
  $\eps=3\cdot10^{-4}$, $\delta=10^{-4}$. EC suffers from constant overhead for
  exact counting.}
\label{fig:ws_1}
\end{figure}

\begin{figure}
\centering
\begin{tikzpicture}
  \begin{loglogaxis}[
    width=11cm,
    log basis x=2,
    xlabel={Number of PEs},
    ylabel={Avg duration in seconds},
    xmin=0.6,
    ytick={1,3,6,10,30,60,100},
    minor ytick={2,3,4,5,6,7,8,9,10,20,30,40,50,60,70,80,90,100,110,120,130,140,150},
    yticklabel=\pgfmathparse{exp(\tick)}\pgfmathprintnumber{\pgfmathresult},
    legend columns=2,
    legend pos=north west,
  ]
  \addplot coordinates { (1,2.91721) (2,3.78894) (4,3.73834) (8,4.67459) (16,6.06402) (32,6.11165) (64,6.11105) (128,6.18752) (256,6.22839) (512,6.31059) (1024,6.36118) (2048,6.44885) };
  \addlegendentry{PAC};
  \addplot coordinates { (1,2.93317) (2,3.92576) (4,3.83584) (8,5.02242) (16,6.8034) (32,7.82993) (64,9.5926) (128,13.287) (256,20.2857) (512,35.3482) (1024,63.2194) (2048,120.177) };
  \addlegendentry{Naive};
  \addplot coordinates { (1,2.07282) (2,2.71298) (4,2.52202) (8,3.13276) (16,4.12155) (32,4.08709) (64,4.07857) (128,4.07005) (256,4.07844) (512,4.09159) (1024,4.13643) (2048,4.21051) };
  \addlegendentry{EC};
  \addplot coordinates { (1,2.96499) (2,3.86641) (4,3.73763) (8,4.76514) (16,6.15482) (32,6.33491) (64,6.36312) (128,6.4643) (256,6.55763) (512,6.63166) (1024,6.69568) (2048,6.76588) };
  \addlegendentry{Naive Tree};
 \end{loglogaxis}
\end{tikzpicture}
\caption{Weak scaling plot for computing the 32 most frequent objects,
  $n/p=2^{28}$, $\eps=10^{-6}$, $\delta=10^{-8}$. Only EC can use sampling,
  other methods must consider all objects.}
\label{fig:ws_2}
\end{figure}

\paragraph{Results.} Figure~\ref{fig:ws_1} shows the results for~$2^{28}$
elements per PE using~$\eps=3\cdot 10^{-4}$ and~$\delta=10^{-4}$.  We can
clearly see that Algorithm \Naive does not scale beyond a single node at all
($p > 16$).  In fact, its running time is directly proportional to~$p$, which is
consistent with the coordinator receiving~$p-1$ messages---every other PE sends
its key-aggregated sample to the coordinator.  Algorithm \NaiveTree fares
better, and actually improves as more PEs are added.  This is easily explained
by the reduced sample size per PE as~$p$ increases, decreasing sampling time.
However, communication time begins to dominate, as the decrease in overall
running time is nowhere near as strong as the decrease in local sample size.
This becomes clear when comparing it to Algorithm \PAC, which outperforms
\NaiveTree for any number of PEs.  We can see that it scales nearly
perfectly---doubling the number of PEs (and thereby total input size)  roughly
halves running time.  Since these three algorithms all use the same sampling
rate, any differences in running time are completely due to time spent on
communication.

Lastly, let us consider Algorithm \EC.  In the beginning, it benefits from its
much smaller sample size (see Section~\ref{ss:topkcount}), but as~$p$ grows, the
local work for exact counting dominates overall running time strongly and
Algorithm EC is no longer competitive.  Since local work remains constant with
increasing~$p$, we see nearly no change in overall running time.  To see the
benefits of Algorithm \EC, we need to consider stricter accuracy requirements.

In Figure~\ref{fig:ws_2}, we consider~$\eps=10^{-6}$ and~$\delta=10^{-8}$.  For
Algorithms \PAC, \Naive, and \NaiveTree, this requires considering the entire
input for any number of PEs, as sample size is proportional
to~$\frac{1}{\eps^2}$, which the other terms cannot offset here.  Conversely,
Algorithm \EC's sample size depends only linearly on~$\eps$, resulting in sample
sizes orders of magnitude below those of the other algorithms.

Again, we can see that Algorithm \Naive is completely unscalable.  Algorithm
\NaiveTree performs much better, with running times remaining roughly constant
at around 6.5 seconds as soon as multiple nodes are used.  Algorithm \PAC
suffers a similar fate, however it is slightly faster at 6.2 seconds.  This
difference stems from reduced communication volume.  However, both are dominated
by the time spent on aggregating the input.  Lastly, Algorithm \EC is
consistently fastest, requiring 4.1 seconds, of which 3.7 seconds are spent on
exact counting\footnote{When not all cores are used, memory bandwidth per core
  is higher.  This allows faster exact counting for $p=1$ to 8 cores on a single
  node.}.  This clearly demonstrates that Algorithm \EC is superior for
small~$\eps$.

Smaller local input sizes do not yield significant differences, and preliminary
experiments with elements distributed according to a negative binomial
distribution proved unspectacular and of little informational value, as the
aggregated samples have much fewer elements than in a Zipfian distribution---an
easy case for selection.

\section{Conclusions}\label{s:conclusions}

We have demonstrated that a variety of top-$k$ selection problems can be solved
in a communication efficient way, with respect to both communication volume and
latencies.  The basic methods are simple and versatile---the owner-computes
rule, collective communication, and sampling.  Considering the significant
previous work on some of these problems, it is a bit surprising that such simple
algorithms give improved results for such fundamental problems.  However, it
seems that the combination of communication efficiency and parallel scalability
has been neglected for many problems.  Our methods might have particular impact
on applications where previous work has concentrated on methods with a pure
master--worker scheme.

It is therefore likely that our approach can also be applied to further
important problems.  For example, distributed streaming algorithms that
generalize the centralized model of Yi and Zhang~\cite{YiZhang13} seem very
promising.  The same holds for lower bounds, which so far have also neglected
multiparty communication with point-to-point communication (see also
\cite{SSM13}).

Closer to the problems considered here, there is also a number of interesting
open questions.  For the sorted selection problem from Section~\ref{ss:musel}, it
would be interesting to see whether there is a scalable parallel algorithm that
makes an information theoretically optimal number of comparisons as in the
sequential algorithm of Varman et al.~\cite{VSIR91}.  Our analysis of
approximate multiselection ignores the case where~$\kmax-\nobreak\kmin=o(\kmax)$.  It
can probably be shown to run in expected
time~$\Ohsmall{\Tstart\log p\log\frac{\kmax}{\kmax-\kmin}}$.
For the multicriteria top-$k$ problem from Section~\ref{s:multic}, we could
consider parallelization of advanced algorithms that scan less elements and perform
less random accesses, such as \cite{BMSTW06}.

Regarding the top-$k$ most frequent objects and sum aggregation, we expect to be
able to conduct fully distributed monitoring queries without a substantial
increase in communication volume over our one-shot algorithm.




\bibliographystyle{plain}
\bibliography{diss.bib}
\newpage
\begin{appendix}

\section[Multisequence Selection]{Multisequence Selection \normalsize\rm\cite{ABSS15}}\label{ss:msselect}
\renewcommand{\figurename}{Algorithm}
\begin{figure}
\begin{code}
\Rem{select object with global rank $k$}\\
\Procedure msSelect$(s, k)$\+\\
  \If $\sum_{1\leq i\leq p}|s@i| = 1$ \Then \RRem{base case}\+\\
    \Return the only nonempty object\-\\
  select a pivot $v$ uniformly at random\\
  find $j$ such that $s[1..j] < v$ and $s[j+1..]\geq v$\-\\
  \If $\sum_{1\leq i\leq p}|j@i| \geq k$ \Then\+\\ \Return msSelect$(s[1..j], k)$\-\\
  \Else\+\\ \Return msSelect$(s[j_1+1..], k-\sum_{0\leq i<p}|j@i|)$\\
\end{code}
\caption{\label{alg:musel}Multisequence selection.}
\end{figure}
Figure~\ref{alg:musel} gives high level pseudo code.  The base case occurs if
there is only a single object (and~$k=1$).  We can also restrict the search to
the first $k$ objects of each local sequence. A random object
is selected as a pivot. This can be done in parallel by choosing the same random
number between~1 and~$\sum_{i}|s@i|$ on all~PEs. Using a prefix sum over the
sizes of the sequences, this object can be located easily in time~%
$\Oh{\Tstart\log p}$. Where ordinary quickselect has to partition the input
doing linear work, we can exploit the sortedness of the sequences to obtain the
same information in time $\Oh{\log \sigma}$ with $\sigma := \max_i|s@i|$ by doing binary
search in parallel on each~PE.\@ If items are evenly distributed, we have~$\sigma =
\Thsmall{\min(k,\frac{n}{p})}$, and thus only time
$\Ohsmall{\log\min(k,\frac{n}{p})}$ for the search, which partitions all the
sequences into two parts.  Deciding whether we have to continue searching in the
left or the right parts needs a global reduction operations taking time
$\Oh{\Tstart\log p}$. As in ordinary quickselect, the expected depth of the recursion is
$\Ohsmall{\log\sum_i|d_i|}=\Oh{\log\min(kp,n)}$. We obtain the following result.

\begin{theorem}\label{thm:msselect}
Algorithm~\ref{alg:musel} can be implemented to run in expected time
$$\Oh{\left( \Tstart\log{p} + \log{\min{\left(\frac{n}{p}, k\right)}} \right)
  \cdot \log{\min{\left(kp, n\right)}}} = \Oh{\Tstart\log^2kp}\punkt$$
\end{theorem}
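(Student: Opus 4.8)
The plan is to charge the total running time as (cost of one recursion level) $\times$ (expected number of levels) and, since the per-level cost will come out deterministic, use linearity of expectation to make that product rigorous.

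First I would pin down the size parameters. Before the first call we may discard, on every PE, all but its $k$ locally smallest objects: an object of global rank $\leq k$ has local rank $\leq k$ on its own PE, so correctness is preserved, and now the total number of active objects is $N:=\sum_{1\leq i\leq p}|s@i|\leq\min(kp,n)$, while each local sequence has length at most $\sigma:=\max_i|s@i|$, which is $\Thsmall{\min(k,n/p)}$ for balanced inputs and at most $k$ in general. Next I would bound one level. The pivot is a uniformly random element of the \emph{global} active multiset; it can be chosen by drawing the same random integer in $1..\sum_i|s@i|$ on all PEs, locating the corresponding object via a prefix sum over the local sizes, and broadcasting its value --- all in $\Oh{\Tstart\log p}$. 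Partitioning is where local sortedness replaces the linear scan of ordinary quickselect: each PE binary-searches for its split index $j$ in time $\Oh{\log\sigma}=\Oh{\log\min(n/p,k)}$, and a sum-reduction of the $j$ values then decides which side to recurse on, again $\Oh{\Tstart\log p}$. So one level costs $\Oh{\Tstart\log p+\log\min(n/p,k)}$, deterministically and independently of the branch taken.

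The remaining, and only genuinely probabilistic, ingredient is the recursion depth, and this is where the care goes. The argument is the classical quickselect analysis applied to the global active multiset: a uniformly random pivot lands in its middle half with probability at least $1/2$, and each such event shrinks the multiset to at most $3/4$ of its size, so the expected number of levels until the set has a single object is $\Oh{\log N}=\Oh{\log\min(kp,n)}$. This transfers verbatim from the sequential setting because the pivot is always drawn uniformly from the whole remaining multiset, irrespective of how its elements are spread over the PEs. Multiplying the deterministic per-level cost by the expected depth yields
$$\Oh{\bigl(\Tstart\log p+\log\min(n/p,k)\bigr)\cdot\log\min(kp,n)}\punkt$$

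For the simplified form I would observe $\log\min(n/p,k)\leq\log k\leq\log kp$, $\log\min(kp,n)\leq\log kp$, and $\Tstart\log p\leq\Tstart\log kp$; assuming $\Tstart=\Omsmall{1}$, the first factor is $\Oh{\Tstart\log kp}$ and the product collapses to $\Oh{\Tstart\log^2 kp}$. The only step I expect to require attention is the depth bound; everything else is bookkeeping. Note also that the unbalanced case only weakens the first, unsimplified bound (through $\sigma$), not the $\Oh{\Tstart\log^2 kp}$ statement, since $\log\sigma\leq\log k$ always.
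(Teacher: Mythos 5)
Your proposal is correct and follows essentially the same route as the paper's argument: truncate each local sequence to its $k$ smallest objects so the active multiset has size at most $\min(kp,n)$, bound one level by $\Oh{\Tstart\log p}$ for pivot location/reduction plus $\Oh{\log\sigma}$ for the parallel binary searches, and multiply by the $\Oh{\log\min(kp,n)}$ expected quickselect depth. Your explicit remarks on why the depth analysis transfers to the distributed multiset and on the role of $\Tstart=\Omsmall{1}$ in the simplified bound are slightly more careful than the paper's exposition, but the substance is identical.
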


\section{Running Times of Existing Algorithms}\label{app:prevproof}

We now prove the running times given for previous works in
Table~\ref{tbl:results}.


\subsubsection{Unsorted Selection}

Previous algorithms, see \cite{San98a-e}, rely on randomly distributed input
data or they explicitly redistribute the data.  Thus, they have to move
$\Omsmall{\sfrac{n}{p}}$ elements in the worst case.  The remainder of the bound
follows trivially.  \qed


\subsection{Sorted Selection}

We could not find any prior results on distributed multiselection from
sorted lists and thus list a sequential result by Varman et al.~\cite{VSIR91}.


\subsection{Bulk Parallel Priority Queue}

The result in \cite{San98a-e} relies on randomly distributed input
data. Therefore, in operation~\textsf{insert*}, each PE needs to send its
$\Ohsmall{k/p}$ elements to random PEs, costing $\Ohsmall{(\Tstart +
  \Tword)\frac{k}{p}}$ time. Then, operation \textsf{deleteMin*} is fairly
straight-forward and mostly amounts to a selection.  The deterministic
parallel heap \cite{DeoPra92} needs to sort inserted elements and then they
travel through $\log\sfrac{n}{p}$ levels of the data structure, which is
allocated to different PEs. This alone means communication cost
$\Om{\Tword\sfrac{k}{p}\log p}$.


\subsection{Heavy Hitters Monitoring}

Huang et al.\ give a randomized heavy hitters monitoring
algorithm~\cite{HuaYiZha2012} that, \emph{for constant failure
  probability~$\eps$}, requires time
$\Oh{\frac{n}{p} + \Tstart \frac{\sqrt{p}}{\gamma}\log{n}}$ in our model.

\medskip\noindent
\begin{proof}
  All communication is between the controller node and a monitor node, thus the
  maximum amount of communication is at the controller node.  Each update, which
  consists of a constant number of words, is transmitted separately.  Thus, the
  communication term given by the authors transfers directly into our model
  (except that the number of monitor nodes~$k$ is~$p-1$ here).
\end{proof}


\subsection{Top-k Frequent Objects Monitoring}

Monitoring Query 1 of~\cite{BabOls03short} performs top-$k$ most frequent object
monitoring, for which it incurs a running time of
$\Om{\frac{n}{p} + \Tword \frac{k}{\gamma} + \Tstart\frac{1}{\gamma}}$ for
relative error bound~$\gamma=\frac{n}{\Delta}$, where $\Delta$ corresponds
to~$\eps$ in their notation.  This algorithm also has further restrictions: It
does not provide approximate counts of the objects in the top-$k$ set.  It can
only handle a small number~$N$ of distinct objects, all of which must be known
in advance.  It requires~$\Th{Np}$ memory on the coordinator node, which is
prohibitive if $N$ and $p$ are large.  It must also be initialized with a
top-$k$ set for the beginning of the streams, using an algorithm such as
TA~\cite{FLN03}.  We now present a family of inputs for which the algorithm uses
the claimed amount of communication.

Initialize with~$N=k+2$ items~$O_1,\ldots,O_{k+2}$ that all have the same
frequency.  Thus, the initial top-k set comprises an arbitrary~$k$ of these.
Choose one of the two objects that are \emph{not} in the top-$k$ set, and refer
to this object as~$O_s$, and pick a peer (PE)~$N_f$.  Now, we send~$O_s$
to~$N_f$ repeatedly, and all other items to all other peers in an evenly
distributed manner (each peer receives around the same number of occurrences of
each object).  After at most $2\Delta$ steps, the top-$k$ set has become invalid
and an expensive full resolution step is required\footnote{This actually happens
  much earlier, the first ``expensive'' resolution is required after
  only~$\Ohsmall{\frac {\Delta}{p}}$ steps.  This number increases, but will
  never exceed $\sum_{i=0}^{\infty}{\Delta (k+1)^{-i}} \leq 2\Delta$.}.  As we
expect~$\Delta$ to be on the large side, we choose~$F_0=0$
and~$F_j=\frac{1}{p-1}$ for~$j>0$ as per the instructions
in~\cite{BabOls03short}.  We can repeat this cycle to obtain an instance of the
example family for any input size~$n$.  Note that the number of ``cheap''
resolution steps during this cycle depends on the choice of the~$F_j$ values,
for which Babcock and Olston give rough guidelines of what might constitute a
good choice, but do not present a theoretical analysis of how it influences
communication.  Here, we ignore their cost and focus solely on the cost of
``expensive'' resolutions.

By the above, an ``expensive'' resolution round is required every (at
most)~$\Oh{p\Delta}$ items in the input.  Since the resolution set
contains~$k+1$ objects (the top-$k$ set plus the non-top-$k$ object with a
constraint violation), each ``expensive'' resolution has communication
cost~$\Th{\Tword k p + \Tstart p}$.  Thus, we obtain a total communication cost
for expensive resolutions of
$\Omsmall{\frac{1} {p\gamma} (\Tword kp + \Tstart
  p)}=\Omsmall{\Tword\frac{k}{\gamma} + \Tstart\frac{1}{\gamma}}$,
for a given relative error bound~$\gamma$.  Additionally, each item requires at
least constant time in local processing, accounting for the additive
$\Omsmall{\frac{n}{p}}$ term.  The actual worst-case communication cost is
likely even higher, but this example suffices to show that the approach
of~\cite{BabOls03short} is not communication-efficient in our model.  \qed


\subsection{Multicriteria Top-k}

Previous algorithms such as TPUT~\cite{CaoWang04} or
KLEE~\cite{KLEE2005} are not directly comparable to our approach for a
number of reasons. First, they do not support arbitrary numbers of
processing elements, but limit $p$ to the number of criteria $m$. Each~%
PE is assigned one or more complete index lists, whereas our approach
splits the \emph{objects} among the~PEs, storing the index lists of
the locally present objects on each~PE. This severely limits TPUT's
and KLEE's scalability. Secondly, as noted in the introduction, these
algorithms use a centralized master--worker approach, where the master
(or \emph{coordinator}) node handles \emph{all} communication. This
further limits scalability and leads to an inherent increase in
communication volume by a factor of up to~$p$. Thirdly, they are
explicitly designed for \emph{wide-area networks} (WANs), whereas our
algorithms are designed with strongly interconnected PEs in mind, as
they might be found in a data center. Since the modeling assumptions
are too different to provide a meaningful comparison, we refrain from
giving a communication analysis of these algorithms \mbox{(nor was one
provided in the original papers).}

\end{appendix}

\end{document}